\documentclass[journal,twoside,web]{ieeecolor}
\usepackage{generic}
\usepackage{cite}
\usepackage{amsmath,amssymb,amsfonts}
\usepackage{graphicx}
\usepackage{textcomp}

\usepackage{array}
\usepackage{stfloats}
\usepackage{verbatim}
\usepackage{caption}
\usepackage{subcaption}
\newtheorem{theorem}{Theorem}

\def\BibTeX{{\rm B\kern-.05em{\sc i\kern-.025em b}\kern-.08em
    T\kern-.1667em\lower.7ex\hbox{E}\kern-.125emX}}
\markboth{\journalname, VOL. XX, NO. XX, XXXX 20XX}
{Author \MakeLowercase{\textit{et al.}}: Preparation of Papers for IEEE TRANSACTIONS and JOURNALS (February 2017)}
\begin{document}
\title{Distributed Differential Graphical Game for Control of Double-Integrator Multi-Agent Systems with Input Delay}
\author{Hossein B. Jond
%\thanks{This work was supported by the Czech Science Foundation (GACR) grant no. 23-07517S.}
\thanks{The author is with the Department of Cybernetics, Faculty of Electrical
Engineering, Czech Technical University in Prague, Czech Republic (e-mail:
hossein.jond@fel.cvut.cz).}}

\maketitle

\begin{abstract}
This paper studies cooperative control of noncooperative double-integrator multi-agent systems (MASs) with input delay on connected directed graphs in the context of a differential graphical game (DGG). In the distributed DGG, each agent seeks a distributed information control policy by optimizing an individual local performance index (PI) of distributed information from its graph neighbors. The local PI, which quadratically penalizes the agent's deviations from cooperative behavior (e.g., the consensus here), is constructed through the use of the graph Laplacian matrix. For DGGs for double-integrator MASs, the existing body of literature lacks the explicit characterization of Nash equilibrium actions and their associated state trajectories with distributed information. To address this issue, we first convert the $N$-player DGG with $m$ communication links into $m$ coupled optimal control problems (OCPs), which, in turn, convert to the two-point boundary-value problem (TPBVP). We derive the explicit solutions for the TPBV that constitute the explicit distributed information expressions for Nash equilibrium actions and the state trajectories associated with them for the DGG. An illustrative example verifies the explicit solutions of local information to achieve fully distributed consensus.  
\end{abstract}

\begin{IEEEkeywords}
Consensus, differential graphical game (DGG), distributed information, input delay, multi-agent system (MAS), Nash equilibrium.
\end{IEEEkeywords}

\section{Introduction}
\IEEEPARstart{C}{ooperation} is fundamental in distributed biological multi-agent systems (MASs) for certain eco-evolutionary advantages~\cite{Nowak}. Cooperative human-engineered distributed MASs are a research trend that draws inspiration from biological systems like flocks of birds or bacterial colonies. In the past decades, cooperative control of MASs on graphs has received increasing attention due to its extensive applications such as consensus~\cite{LI20131986}, synchronization~\cite{6571228}, flocking~\cite{Olfati-Saber}, formation~\cite{Fu2014}, rendezvous~\cite{Rendezvous}, and so on. Classical control designs for such systems are centralized and require global knowledge of the system or a complete communication graph, which is excessive and conflicts with the communication infrastructure of MAS\cite{LIN2019266,Song}. Cooperative distributed control of MAS in terms of optimality is of prime interest. However, finding a locally distributed control that is optimal in some sense, e.g., when each agent is optimizing its own local performance index (PI), is particularly demanding~\cite{WanACC2021}. For noncooperative MASs on graphs, this can be achieved by a differential graphical game (DGG)~\cite{Vamvoudakis}, where the notion of optimality is Nash equilibrium.  

DGGs are a significant class of differential games with applications to the distributed control of networked systems on graphs. In recent years, there has been an increased interest in studying various cooperative control problems using differential (or dynamic) games (see, e.g.,consensus~\cite{7907492,ZHOU2022105376}, flocking~\cite{6623087}, synchronization~\cite{9565820,8232604}, formation~\cite{LIN2019266,9146203}, and pursuit-evasion~\cite{8754740,7126188}). The challenges of applying DGG-based control are twofold. First, differential games, in general, are difficult to solve~\cite{9599467,7909033,Vamvoudakis}. Second, even if a solution is found, it requires global information. However, there have been attempts to overcome these challenges, for instance, by estimation~\cite{Inga}, constructing local solutions to global Nash~\cite{Shi}, iterative control laws that converge to the Nash equilibrium~\cite{H-Li}, neural network-based integrated heuristic dynamic programming~\cite{Duan}, reinforcement learning for Nash~\cite{Ding}, value function approximation~\cite{Vamvoudakis}. Solving a DGG also depends on whether it is discrete-time~\cite{ZHOU2022105376} or Stackelberg type~\cite{Polycarpou}. This paper focuses on a continuous-time DGG of the Nash type.

Due to its analytical tractability, the framework of linear-quadratic (LQ) differential games is popular for analyzing MASs. The double-integrator MASs on graphs can be modeled as LQ DGGs with quadratic performance indices (PIs)~\cite{Gu,Lin2014,LIN2019266,9565820}. The emerging noncooperative DGG problem in this study is in fact a continuous-time LQ Nash differential game. The open-loop Nash equilibrium solution as well as its uniqueness and existence for this game are given in~\cite{Basar-book,Engwerda-book}. Nevertheless, the given Nash equilibrium solution requires global state information~\cite{Lin2014,LIN2019266}. The main challenge in noncooperative differential games is finding a distributed information Nash equilibrium solution that can be executed locally~\cite{Liu-Lopez,LIN2019266}. Recently, a distributed adaptive Nash equilibrium solution for DGGs with an infinite planning horizon was proposed~\cite{9565820}. In infinite-horizon differential games, there is no specified terminal time nor a terminal state constraint in the PIs associated with players, and the solution to the HJB equation for nonlinear systems or the Riccati equation for linear systems is time-invariant. In the case of a finite horizon, a terminal state constraint is usually included in the PIs, and the solution to either the HJB equation or the Riccati equation is time-dependent. Finding a distributed solution with a finite planning horizon is a great challenge. A distributed estimation of Nash equilibrium for DDGs with a finite planning horizon has been constituted in~\cite{LIN2019266}. On the other hand, explicit expressions have not been found except for differential games with simplified single-integrator dynamics in 1-dimensional coordinates (e.g., see~\cite{jond-IET,YildizO21,7058342}). The input delay considered in this paper even makes finding such a solution more difficult.  

Commonly, a time delay occurs in the control input when the control action depends on the relative state information transmitted over the wireless communication network. Input-delayed cooperative MASs have been extensively studied by the control community. However, only a few research works have been done in the game theoretical control of MASs with input delay (e.g.,~\cite{Ai}).

%The main contribution of the paper is deriving a distributed explicit expression of time, delay, and the initial state for a DGG with both a terminal PI of state and a running PI of state and control with input delay in comparison with terminal PI and only running PI of control without delay in~\cite{Lin2014,LIN2019266} or only running PI of state and control without a terminal PI and input delay in~\cite{jond-IET,YildizO21}. 

In this paper, cooperative control of double-integrator MASs on graphs is formulated as an LQ DGG. The Nash equilibrium of the DGG is a mutually beneficial control strategy that nobody will deviate from. Therefore, it can be exploited as a self-enforcing MAS control strategy. The main contribution of the paper is deriving a distributed explicit expression of time, delay, and the initial state for the LQ DGG with input delay. Our results differs from similar works in~\cite{Lin2014,LIN2019266,Engwerda-book,Gu,jond-IET,YildizO21,Ai} as follows:
\begin{enumerate}
  \item  The DGG in this paper has both a terminal PI of state and a running PI of state and control with input delay. The DGG in~\cite{Lin2014,LIN2019266} has a terminal PI and a running PI of only control without delay. The DGG in~\cite{jond-IET,YildizO21} has only running PI of state and control without delay and without terminal PI.  
  \item\cite{Engwerda-book,Gu,Ai} do not offer an explicit distributed solution for DGGs. While~\cite{Engwerda-book,Gu} presented a numerical non-distributed solution,~\cite{Ai} proposed a distributed Nash equilibrium-seeking algorithm with a distributed observer for MASs with input delay that converges asymptotically.    
\end{enumerate}

The rest of this paper is organized as follows. Section~\ref{sec:pre} provides preliminaries. The DGG for cooperative control of double-integrator multi-agent systems is introduced in Section~\ref{sec:DGG}. In Section~\ref{sec:main}, the introduced DGG is converted to a set of coupled optimal control problems (OCPs) where the explicit solution for the latter is derived. Section~\ref{sec:sim} illustrates the simulation results for the cooperative consensus control of double-integrator multi-agent systems with input delay. The conclusion is given in Section~\ref{sec:con}.

\section{Preliminaries}\label{sec:pre}

\subsection{Kronecker Product}
Let $X \in \mathit{\mathbb{R}}^{m\times n}, Y \in \mathit{\mathbb{R}}^{p\times q}$ be real-valued (or complex-valued) matrices. The Kronecker product is defined as
\begin{equation*}
X \otimes Y=
\begin{bmatrix}
x_{11}Y & \cdots & x_{1n}Y\\
\vdots & \ddots & \vdots\\
x_{m1}Y & \cdots & x_{mn}Y
\end{bmatrix}
\in \mathit{\mathbb{R}}^{mp \times nq}.         
\end{equation*}
The Kronecker product has the following properties \cite{laub2005matrix}
\begin{align}\label{eq:Kronecker}
&(X \otimes Y)^\top=X^\top \otimes Y^\top, \quad
(X \otimes Y)^{-1}=X^{-1} \otimes Y^{-1},\nonumber \\
&(X \otimes Y)(U \otimes V)=XU\otimes YV, \quad \mathrm{e}^{X \otimes Y}=\mathrm{e}^{Y} \otimes \mathrm{e}^{X},\nonumber\\
&\mathrm{det} (X \otimes Y)= (\mathrm{det} X)^m (\mathrm{det} Y)^n,  X \in \mathit{\mathbb{R}}^{n\times n}, Y\in \mathit{\mathbb{R}}^{m\times m}.
\end{align}

\subsection{Graph Theory}
A directed graph is a pair $\mathcal{G}(\mathcal{V},\mathcal{E})$ where $\mathcal{V}$ is a finite set of vertices/nodes and $\mathcal{E}\subseteq \{(i,j):i,j \in \mathcal{V}\}$ is a set of directed edges/arcs. Each edge $ (i,j) \in \mathcal{E}$ represents an information flow from node $i$ to node $j$ and is assigned a positive weight $\mu_{ij}>0$. The set of neighbors of vertex $i$ is defined by $\mathcal{N}_i=\{j \in \mathcal{V}:(i,j)~\text{or}~(j,i) \in \mathcal{E},j\neq i\}$. Graph $ \mathcal{G}$ is connected if for every pair of vertices $ (i,j) \in \mathcal{V}\times\mathcal{V}$, from $ i$ to $ j$ for all $ j  \in  \mathcal{V}, j \neq  i $, there exists a path of (undirected) edges from $\mathcal{E}$. 

Matrix $D \in \mathit{\mathbb{R}} ^{\mathcal{|V|}\times\mathcal{|E|}}$ is the incidence matrix of $\mathcal{G}$ where $D$'s $uv$th element is 1 if the node $u$ is the head of the edge $v$, $-1$ if the node $u$ is the tail, and $0$, otherwise. The distributed Laplacian matrix for each node $i$ that depends only on information about that node and its neighbors in the graph is defined as
\begin{equation*}
    L_i=DW_iD^\top \in\mathbb{R}^{\mathcal{|V|}\times \mathcal{|V|}}
\end{equation*}
where $W_i=\mathrm{diag}(\cdots,\mu_{ij},\cdots)~\forall j\in\mathcal{N}_i\in \mathit{\mathbb{R}}^{\mathcal{|E|}\times \mathcal{|E|}}$. 

The Kronecker product is used to extend the dimension of a matrix. The extended dimension Laplacian matrix $\hat{L}_i$ is defined as 
\begin{equation*}
\hat{L}_i=\hat{D}\hat{W}_i\hat{D}^\top\in\mathbb{R}^{m\mathcal{|V|}\times m\mathcal{|V|}}
\end{equation*}
where $\hat{D}=D\otimes I_{q\times q}$, $\hat{W}_i=W_i\otimes I_{q\times q}$, and $I_{q\times q}$ is the identity matrix of dimension $q\in\mathit{\mathbb{N}}$. 
The $m$-dimensional graph Laplacian $\hat{L}_i$ is symmetric $(\hat{L}_i=\hat{L}_i^\top)$, positive semidefinite $(\hat{L}_i\geq 0)$, and satisfies the (local) sum-of-squares property \cite{Olfati-Saber}
\begin{equation} \label{eq:sum-of-squares}
\sum_{\forall j\in\mathcal{N}_i} \mu_{ij} \| x_i-x_j \|^2 =x^\top \hat{L}_i x   
\end{equation}
where $ x=[x_1,\cdots,x_{\mathcal{|V|}}]^\top$ is a nonzero vector of all nodes states $x_i\in\mathbb{R}^{q\times 1}$ and $\lVert .\rVert$ is the Euclidean norm.

\section{Differential Graphical Game}\label{sec:DGG}

Consider a MAS $\mathcal{V}=\{0,1,\cdots,N\}$ with $N+1$ agents, where 0 corresponds to an externally controlled agent and the rest are agents with their control input to be designed. The inter-agent communications are restricted by a connected directed graph  $\mathcal{G}(\mathcal{V},\mathcal{E})$. Each agent $i\in\{1,\cdots,N\}$'s nonlinear dynamics is reduced to a double-integrator model as follows using the feedback linearization technique (e.g., see unmanned aerial vehicle (UAV) linearization in~\cite{Lin2014,6341809} and distributed generators (DGs) linearization in~\cite{9565820}) 
\begin{equation}\label{eq:double-int}
    \ddot{p}_i(t)=u_i(t-\tau)
\end{equation}
where $p_i\in\mathbb{R}^{q\times 1}$ and $u_i\in\mathbb{R}^{q\times 1}$ are the $q$-dimensional information and control input vectors, respectively, and $\tau>0$ is a constant time delay ($u_i(t)=0$ when $t<\tau$). The dynamics (\ref{eq:double-int}) is equivalent to the following linear time-invariant agent dynamics with input delay
\begin{equation}\label{eq:agent-dynamics}
\begin{bmatrix}
     \dot{p}_i(t)   \\ \ddot{p}_i(t)
\end{bmatrix}=(\begin{bmatrix}0 & 1\\ 0 & 0 \end{bmatrix}\otimes I_{q\times q})\begin{bmatrix}
    p_i(t) \\ \dot{p}_i(t)
\end{bmatrix}+(\begin{bmatrix}0 \\ 1 \end{bmatrix}\otimes I_{q\times q})u_i(t-\tau).
\end{equation}

%Let $n=|\mathcal{V}|$ and $m=|\mathcal{E}|$. 
Let $x(t)=[p_0^\top(t),\cdots,p_N^\top(t),\dot{p}_0^\top(t),\cdots,\dot{p}_N^\top(t)]^\top\in\mathbb{R}^{2(N+1)q\times 1}$ be the state vector for $\mathcal{V}$. The state dynamics of the system with a given initial state can then be described by
\begin{equation}\label{eq:MAS-dynamics}
    \dot{x}(t)=A_0 x(t)+\sum_{i=0}^N B_i u_i(t-\tau), \quad x(0)=x_0
\end{equation}
where $A_0=A\otimes I_{q\times q} \in\mathbb{R}^{2(N+1)q\times 2(N+1)q}$, $A=\begin{bmatrix}0 & 1\\ 0 & 0 \end{bmatrix}\otimes I_{(N+1)\times (N+1)} \in\mathbb{R}^{2(N+1)\times 2(N+1)}$, $B_i=\begin{bmatrix}
    0_{(N+1)\times 1} \\b_i
\end{bmatrix}\otimes I_{q\times q}\in\mathbb{R}^{2(N+1)q\times q}$, $b_i=[0,\cdots,1,\cdots,0]^\top\in\mathbb{R}^{(N+1)\times 1}$, and $0_{(N+1)\times 1}$ is the zero vector of dimension $N+1$.  

In the context of a DGG, each agent $i\in\{1,\cdots,N\}$ is so-called a player of the DGG, $\{1,\cdots,N\}$ corresponds to the set of players, and the system dynamics (\ref{eq:MAS-dynamics}) represents the state dynamics of the DGG. In the DGG approach to cooperative control of the noncooperative MAS (\ref{eq:MAS-dynamics}), each player seeks to optimize a local PI that solely reflects its interests by finding a suitable control policy. Define the finite horizon PI 
\begin{equation}
\label{eq:quadratic-cost}
J_i= C_i(x(T),T)+\int_{0}^T L_i(x(t),u_i(t-\tau),t)~\mathrm{dt}
\end{equation}
for each player $i\in\{1,\cdots,N\}$ where $C_i(x(T),T)$ is the terminal PI at horizon time $T$ and $L_i(x(t),u_i(t-\tau),t)$ is the running PI over the entire horizon. Note that $C_i$ and $L_i$ are associated with distributed information available only to agent $i$ locally through the graph $\mathcal{G}(\mathcal{V},\mathcal{E})$. Each player $i\in\{1,\cdots,N\}$ runs an optimization to find an admissible control policy $u_i(t-\tau)\in\mathcal{U}_i$ that optimizes its unified PI $J_i$ subject to the state dynamics (\ref{eq:MAS-dynamics}).  

Various cooperative control problems can be depicted by a distributed PI (\ref{eq:quadratic-cost}) that quadratically penalizes the local behavior of an agent. In this paper, we define $C_i(x(T),T)$ and $L_i(x(t),u_i(t-\tau),t)$ mathematically for consensus-seeking agents. Thereby, the vectors $p_i(t)$, $\dot{p}_i(t)$, and $u_i(t)$ represent the position, velocity, and acceleration input for agent $i$ in an $q$-dimensional space. Under the DGG framework, each player's aim is to achieve consensus while minimizing their PI $J_i$.   

The MAS in $\mathcal{V}$ governed by the state dynamics (\ref{eq:MAS-dynamics}) and the graph $\mathcal{G}(\mathcal{V},\mathcal{E})$ is said to achieve consensus if for any initial position $p_i(0)$ and initial velocity $\dot{p}_i(0)$, $\forall i\in\{1,\cdots,N\}$, 
\begin{equation} \label{eq:consensus}
   \sum_{j\in\mathcal{N}_i}\mu_{ij}\left(\|p_i(t)-p_j(t)\|^2+\|\dot{p}_i(t)-\dot{p}_j(t)\|^2\right)\rightarrow 0
\end{equation}
as $t \rightarrow T$ and $T>\tau$%is a finite horizon length
~\cite{1431045}. In the DGG context, a weighted sum of local consensus errors and local velocity errors at the horizon can be defined for each player $i\in\{1,\cdots,N\}$ as 
\begin{align} \label{eq:consensus-terminal}
   C_i&(x(T),T)=\nonumber\\&\sum_{j\in\mathcal{N}_i}\omega_{ij}\left(\|p_i(T)-p_j(T)\|^2+\|\dot{p}_i(T)-\dot{p}_j(T)\|^2\right)
\end{align}
where $\omega_{ij}>0$ is a scalar. In addition to the least (\ref{eq:consensus}) and (\ref{eq:consensus-terminal}), each player at the same time naturally seeks to expend the least control effort over the entire horizon. Therefore, a reasonable running PI for player $i\in\{1,\cdots,N\}$ is 
\begin{align}\label{eq:running-cost}
    L_i&(x(t),u_i(t-\tau),t)=r_{i}\lVert u_i(t-\tau) \rVert^2+ \nonumber\\&  \sum_{j\in\mathcal{N}_i}\mu_{ij}\left(\|p_i(t)-p_j(t)\|^2+\|\dot{p}_i(t)-\dot{p}_j(t)\|^2\right)  
\end{align}
where $r_{i}>0$ is a scalar penalizing the control effort term. 

The PIs (\ref{eq:consensus-terminal}) and (\ref{eq:running-cost}) can be depicted in compact form by using the sum-of-squares property (\ref{eq:sum-of-squares}) as follows
\begin{align} 
   C_i(x(T),T)&=x^\top (T) \big(I_{2\times 2}\otimes\hat{L}_{iT}\big) x(T),\label{eq:consensus-terminal-comp}\\
    L_i(x(t),u_i(t-\tau),t)&=x^\top (t)\big(I_{2\times 2}\otimes\hat{L}_i\big)x(t)+\nonumber \\
    &u_i^\top(t-\tau)R_{i}u_i(t-\tau)\label{eq:running-cost-comp}
\end{align}
where $\hat{L}_{iT}=\hat{D}\hat{W}_{iT}\hat{D}^\top$, $\hat{D}=D\otimes I_{q\times q}$, $\hat{W}_{iT}=W_{iT}\otimes I_{q\times q}$, $W_{iT}=\mathrm{diag}(\cdots,\omega_{ij},\cdots)~\forall j\in\mathcal{N}_i\in \mathit{\mathbb{R}}^{m\times m}$ (where $m=|\mathcal{E}|$), and $R_i=r_i\otimes I_{q\times q}$. Therefore, the minimization problem for each player $i\in\{1,\cdots,N\}$ consists of the unified PI (\ref{eq:quadratic-cost}) with components in compact form (\ref{eq:consensus-terminal-comp})-(\ref{eq:running-cost-comp}) subject to state dynamics (\ref{eq:MAS-dynamics}).

The requirement for consensus (\ref{eq:consensus}) can be easily generalized to most cooperative control problems, such as formation control. Accordingly, similar expressions to (\ref{eq:consensus-terminal}) and (\ref{eq:running-cost}) and thereby the compact forms (\ref{eq:consensus-terminal-comp}) and (\ref{eq:running-cost-comp}) can be acquired. For formation control, the compact form of PIs is given in~\cite{8945574}.

Each player's optimal strategy is the solution to its optimization problem. Under an open-loop information structure in the DGG (\ref{eq:MAS-dynamics}) and (\ref{eq:quadratic-cost}), all players simultaneously determine their strategies at the beginning of the game and use this open-loop strategy for the entire horizon. The players must then adjust their control inputs based on the delayed information. In a noncooperative DGG, the players of the game cannot make binding agreements. Therefore, the solution ought to be self-enforcing, such that, once agreed upon, no one has the incentive to deviate from it~\cite{van2012refinements}. A Nash equilibrium possesses the characteristic that any individual player's decision to deviate unilaterally does not result in a lower cost for that player. It is a strategy combination of all players in a noncooperative game where no one can achieve a lower cost by unilaterally deviating from it. The open-loop Nash equilibrium is defined as a set of admissible actions ($u_1^*,\cdots,u_N^* $) if for all admissible ($u_1,\cdots,u_N$) the following inequalities
\begin{align*}
   J_i (u_1^*,\cdots,u_{i-1}^*,u_i^*,u_{i+1}^*&,\cdots,u_N^* )\leq \\&
J_i (u_1^*,\cdots,u_{i-1}^*,u_i,u_{i+1}^*,\cdots,u_N^* ) 
\end{align*}
hold for $i\in\{1,\cdots,N\}$ where $u_i\in\Gamma_i$ and $\Gamma_i=\{u_i(t,x_0)|t\in[0,T]\}$ is the admissible strategy set for player $i$. We assume that $u_i$ ($\forall i\in\{1,\cdots,N\}$) consists of the set of measurable functions from $[0, T]$ into $\Gamma_i$ for which the differential equation (\ref{eq:MAS-dynamics}) has a unique solution and the PI (\ref{eq:quadratic-cost}) exists.

The noncooperative DGG problem in (\ref{eq:MAS-dynamics}) and (\ref{eq:quadratic-cost}) for $\tau=0$ admits a unique Nash equilibrium solution for every initial state $x_0$ iff there exists a solution set to a set of the coupled (asymmetric) Riccati differential equations (see \textit{Theorem 7.2} in~\cite{Engwerda-book}) or, equivalently, iff a specific matrix is invertible (see \textit{Theorem 7.1} in~\cite{Engwerda-book}). Moreover, Nash equilibrium requires global knowledge of the initial state vector $x(0)$. Therefore, the existence of a unique open-loop Nash equilibrium and whether or not the equilibrium and its associated state trajectories are distributed, as well as the input delay, are the main issues that need to be addressed for the DGG problem (\ref{eq:MAS-dynamics}) and (\ref{eq:quadratic-cost}).

\section{Main Results} \label{sec:main}
In this section, we present an explicit solution of distributed Nash actions and their associated state trajectories to the input delayed DGG problem in (\ref{eq:MAS-dynamics}) and (\ref{eq:quadratic-cost}). To the best of our knowledge, such solutions have not yet been reported for the LQ differential game. The presented solution results from applying the proposed systematic approach in Fig.~\ref{fig:diag}. The first step in this approach is converting the $N$-player DGG problem to $m$ OCPs. In terms of the MAS on the graph $\mathcal{G}(\mathcal{V},\mathcal{E})$, this is as converting the $(N+1)(=|\mathcal{V}|)$-node MAS to the $m(=|\mathcal{E}|)$-edge system. 

\begin{figure}[!t]
\centering
       \includegraphics[width=0.5\textwidth]{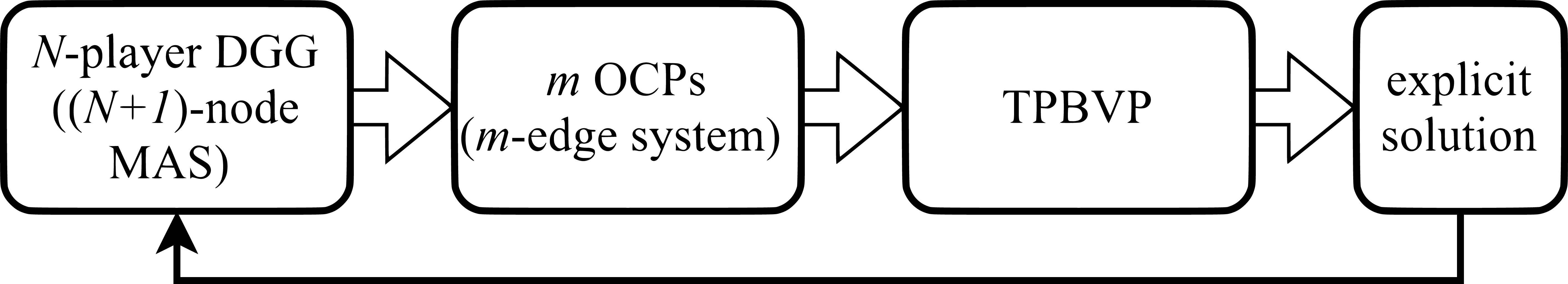}
\caption{Block diagram representation of the proposed systematic approach to solving the DGG.}
\label{fig:diag}
\end{figure} 

Toward this, we define the following new state vector and new control inputs 
\begin{align} 
    &z(t)=(I_{2\times 2}\otimes\hat{D}^\top)x(t)\in\mathbb{R}^{2Mq\times 1},\label{eq:new-state-vector} \\ 
    &\xi_i(t)=\begin{bmatrix}
0 & \cdots & I_{q\times q} & \cdots & 0
    \end{bmatrix}\hat{D}^\top
    \begin{bmatrix}
u_0(t) \\ \vdots \\ u_N(t) 
    \end{bmatrix}\in\mathbb{R}^{q\times 1}.\label{eq:new-control-vector}
\end{align}
The state dynamics (\ref{eq:MAS-dynamics}) in terms of the state vector $z(t)$ and control inputs $\xi_i(t)$ is given by
\begin{equation}\label{eq:new-game-dynamics}
\dot{z}(t)=\hat{A}_0 z(t)+\sum_{i=1}^m\hat{B}_i\xi_i(t-\tau), \quad z(0)=z_0
\end{equation}
where  $\hat{A}_0=\hat{A}\otimes I_{q\times q} \in\mathbb{R}^{2mq\times 2mq}$, $\hat{A}=\begin{bmatrix}0 & 1\\ 0 & 0 \end{bmatrix}\otimes I_{m\times m} \in\mathbb{R}^{2m\times 2m}$, $\hat{B}_i=\begin{bmatrix}
    0 \\ b_i
\end{bmatrix}\otimes I_{q\times q}\in\mathbb{R}^{2mq\times q}$, $\hat{b}_i=[0,\cdots,1,\cdots,0]^\top\in\mathbb{R}^{m\times 1}$, and $z_0=(I_{2\times 2}\otimes\hat{D}^\top)x_0$.%, and $\tau_i\neq 0$ is related to the pair node determined by the $i$th column of $\hat{\bm{D}}^\top$. 

The PIs (\ref{eq:consensus-terminal-comp}) and (\ref{eq:running-cost-comp}) are rewritten as follows
\begin{align} 
   &C_i(x(T),T)=x^\top (T) \big(I_{2\times 2}\otimes\hat{D}\hat{W}_{iT}\hat{D}^\top\big) x(T)\nonumber\\
   &=x^\top (T) \big(I_{2\times 2}\otimes\hat{D}\big)\big(I_{2\times 2}\otimes\hat{W}_{iT}\big)\big(I_{2\times 2}\otimes\hat{D}^\top\big) x(T)\nonumber\\
    &=\Big(\big(I_{2\times 2}\otimes\hat{D}^\top\big)x(T)\Big)^\top \big(I_{2\times 2}\otimes\hat{W}_{iT}\big)\big(I_{2\times 2}\otimes\hat{D}^\top\big) x(T)\nonumber\\
    &=z^\top (T)\big(I_{2\times 2}\otimes\hat{W}_{iT}\big) z(T)\triangleq \mathcal{C}_i(z(T),T)\label{eq:consensus-terminal-new},\\
    &\mathcal{L}_i(z(t),\xi_i(t-\tau),t)\triangleq \nonumber\\&z^\top(t) \big(I_{2\times 2}\otimes\hat{W}_i\big) z(t) 
    +\xi_i^\top(t-\tau)\hat{R}_{i}\xi_i(t-\tau)\label{eq:running-cost-new},
\end{align}
where $\hat{R}_{i}=\hat{r}_i\otimes I_{q\times q}$. Without loss of generality, we assume $r_i=r_j\forall(i,j)\in\mathcal{E}$, i.e., $\hat{r}_k=r_i=r_j$ for the $k$th edge $(i,j)\in\mathcal{E}$ in the $k$th column of $D$. As a result, the following PI emerges
\begin{equation}
\label{eq:new-quadratic-cost}
\mathcal{J}_i=\mathcal{C}_i(z(T),T)+\int_{0}^T\mathcal{L}_i(z(t),\xi_i(t-\tau),t)~\mathrm{dt}
\end{equation}
for $i\in\{1,\cdots,m\}$. The $N$-player DGG problem with input delay in (\ref{eq:MAS-dynamics}) and (\ref{eq:quadratic-cost}) is reduced to $m$ OCPs with input delay in (\ref{eq:new-game-dynamics}) and (\ref{eq:new-quadratic-cost}). In other words, the transition from the $(N+1)$-node MAS to the $m$-edge system, as shown in Fig.~\ref{fig:diag}, is complete. The latter is analytically more tractable to investigate for an explicit solution. 

The existence and uniqueness of the optimal control policies $\xi_i(t)$s and their associated state trajectory $z(t)$ for the $m$ OCPs (\ref{eq:new-game-dynamics}) and (\ref{eq:new-quadratic-cost}) can be investigated by applying the necessary conditions for optimality using Pontryagin’s principle. As a result, these problems are converted into the two-point boundary-value problem (TPBVP). In the context of the DGG (\ref{eq:MAS-dynamics}) and (\ref{eq:quadratic-cost}), the Laplacian matrices $\hat{L}_i$ and $\hat{L}_{iT}$ that occur in the performance index (\ref{eq:quadratic-cost}) are symmetric and positive semidefinite, and $R_i$ are positive definite. According to~\cite{ENGWERDA1998729}, then for the $N$-player differential game, the PIs $J_i$ ($i\in\{1,\cdots,N\}$) are strictly convex function of $u_i$ for all admissible control functions $u_j$, $j\neq i$, and for all $x_0$. This implies that the conditions obtained from Pontryagin’s principle are both necessary and sufficient for consensus control of the given MAS. Similarly, the matrices $\hat{W}_i$ and $\hat{W}_{iT}$ in the performance index (\ref{eq:new-quadratic-cost}) are also symmetric and positive semidefinite, and $\hat{R}_i$ are positive definite. Therefore, also for the $m$ OCPs, the conditions obtained from Pontryagin’s principle are both necessary and sufficient.

Before proceeding with applying the necessary conditions to the $m$ OCPs in order to obtain the TPBVP, one should address the input delay. A common approach in dealing with the input delayed linear dynamical system (\ref{eq:new-game-dynamics}) is to apply an integral transformation to obtain a delay-free linear dynamical system~\cite{delay2014}. Below, we convert the input delayed OCPs (\ref{eq:new-game-dynamics}) and (\ref{eq:new-quadratic-cost}) to the delay-free OCPs. 

The PI (\ref{eq:new-quadratic-cost}) can be decomposed by
\begin{align*}
\mathcal{J}_i=&\mathcal{C}_i(z(T),T)+\nonumber\\&\int_{0}^\tau\mathcal{L}_i(z(t),t)~\mathrm{dt}+\int_{0}^{T-\tau}\mathcal{L}_i(z(t+\tau),\xi_i(t),t)~\mathrm{dt}
\end{align*}
where the middle integral term on the right-hand side is a constant since the control does not take place for $t\in[0,\tau[$. Therefore, the minimization of $\mathcal{J}_i$ is equivalent to the minimization of 
\begin{align}\label{eq:transform-final}
\hat{\mathcal{J}}_i=\mathcal{C}_i(z(T),T)+\int_{0}^{T-\tau}\mathcal{L}_i(z(t+\tau),\xi_i(t),t)~\mathrm{dt}.
\end{align}

The solution of (\ref{eq:new-game-dynamics}) at $t+\tau$ is given by
\begin{align}
    z(t+\tau)&=\mathrm{e}^{(t+\tau)\hat{A}_0}z(0)\nonumber \\&+\int_0^{t+\tau}\mathrm{e}^{(t+\tau-s)\hat{A}_0}\sum_{i=1}^m\hat{B}_i\xi_i(s-\tau)~\mathrm{ds} \nonumber \\
  &=\mathrm{e}^{\tau\hat{A}_0}\Big(\mathrm{e}^{t\hat{A}_0}z(0)+\int_0^{t}\mathrm{e}^{(t-s)\hat{A}_0}\sum_{i=1}^m\hat{B}_i\xi_i(s-\tau)  ~\mathrm{ds}\nonumber \\&+\int_t^{t+\tau}\mathrm{e}^{(t-s)\hat{A}_0}\sum_{i=1}^m\hat{B}_i\xi_i(s-\tau)~\mathrm{ds}\Big).
\end{align}
The first and second terms inside the outer bracket pair on the right-hand side denote the solution of (\ref{eq:new-game-dynamics}) at $t$. Thereby, 
\begin{align}\label{eq:transform-dynamics}
    z(t+\tau)&=\mathrm{e}^{\tau\hat{A}_0}\left(z(t)+\int_t^{t+\tau}\mathrm{e}^{(t-s)\hat{A}_0}\sum_{i=1}^m\hat{B}_i\xi_i(s-\tau)~\mathrm{ds}\right) \nonumber \\&=\mathrm{e}^{\tau\hat{A}_0}\left(z(t)+\int_{t-\tau}^{t}\mathrm{e}^{(t-s-\tau)\hat{A}_0}\sum_{i=1}^m\hat{B}_i\xi_i(s)~\mathrm{ds}\right)\nonumber \\&\triangleq \mathrm{e}^{\tau\hat{A}_0} y(t).
\end{align}
By letting $t=0$ and $t=T-\tau$, we obtain  $\mathrm{e}^{-\tau\hat{A}_0}z(\tau)= y(0)$ and $z(T)= \mathrm{e}^{\tau\hat{A}_0}y(T-\tau)$, respectively. By substituting (\ref{eq:transform-dynamics}) into (\ref{eq:transform-final}), we have
\begin{align}\label{eq:transformed-PI}
\hat{\mathcal{J}}_i=\hat{\mathcal{C}}_i(y(T-\tau),T-\tau)+\int_{0}^{T-\tau}\hat{\mathcal{L}}_i(y(t),\xi_i(t),t)~\mathrm{dt}
\end{align}
where 
\begin{align*} 
  \mathcal{C}(z(T&),T)=z^\top (T)\big(I_{2\times 2}\otimes\hat{W}_{iT}\big) z(T)\nonumber \\
  &=\left(\mathrm{e}^{\tau\hat{A}_0}y(T-\tau) \right)^\top\big(I_{2\times 2}\otimes\hat{W}_{iT}\big) \left(\mathrm{e}^{\tau\hat{A}_0}y(T-\tau) \right)
  \nonumber\\
 &=y^\top(T-\tau) \left(\mathrm{e}^{\tau\hat{A}_0^\top}\big(I_{2\times 2}\otimes\hat{W}_{iT}\big) \mathrm{e}^{\tau\hat{A}_0}\right)y(T-\tau) \nonumber \\&
  \triangleq \hat{\mathcal{C}}_i(y(T-\tau),T-\tau)
\end{align*}
and  
\begin{align*}
   \mathcal{L}_i(z(&t+\tau),\xi_i(t),t)\nonumber\\&=z^\top(t+\tau) \big(I_{2\times 2}\otimes\hat{W}_i\big) z(t+\tau)+\xi_i^\top(t)\hat{R}_{i}\xi_i(t)\nonumber\\
    &=y^\top(t) \left(\mathrm{e}^{\tau\hat{A}_0^\top}\big(I_{2\times 2}\otimes\hat{W}_i\big) \mathrm{e}^{\tau\hat{A}_0}\right)y(t) +\xi_i^\top(t)\hat{R}_{i}\xi_i(t)\nonumber \\&\triangleq \hat{\mathcal{L}}_i(y(t),\xi_i(t),t).
\end{align*}

We notice that  $\hat{A}^2=0$ and $\hat{A}_0^2=(\hat{A}_0^\top)^2=0$. The matrix exponential terms can then be rewritten as
\begin{align*}
    \mathrm{e}^{\tau\hat{A}_0^\top}&\big(I_{2\times 2}\otimes\hat{W}_{iT}\big) \mathrm{e}^{\tau\hat{A}_0}=\nonumber\\&(I+\tau\hat{A}_0^\top)(I_{2\times 2}\otimes\hat{W}_{iT})(I+\tau\hat{A}_0)\triangleq \hat{Q}_{iT},\\
    \mathrm{e}^{\tau\hat{A}_0^\top}&\big(I_{2\times 2}\otimes\hat{W}_{i}\big) \mathrm{e}^{\tau\hat{A}_0}=\nonumber\\&(I+\tau\hat{A}_0^\top)(I_{2\times 2}\otimes\hat{W}_{i})(I+\tau\hat{A}_0)\triangleq \hat{Q}_{i}.
\end{align*}
Therefore, the PIs $\hat{\mathcal{C}}_i(y(T-\tau),T-\tau)$ and $\hat{\mathcal{L}}_i(y(t),\xi_i(t),t)$ in (\ref{eq:transformed-PI}) are simplified as 
\begin{align}
    \hat{\mathcal{C}}_i(y(T-\tau),T-\tau)=y^\top(T-\tau) \hat{Q}_{iT}y(T-\tau), \\
    \hat{\mathcal{L}}_i((t),\xi_i(t),t)=y^\top(t) \hat{Q}_{i}y(t) +\xi_i^\top(t)\hat{R}_{i}\xi_i(t).
\end{align}

The time derivative of (\ref{eq:transform-dynamics}) is given by
\begin{align}
    \mathrm{e}^{\tau\hat{A}_0}\dot{y}(t)=\dot{z}(t+\tau) 
    =\hat{A}_0 z(t+\tau)+\sum_{i=1}^m\hat{B}_i\xi_i(t)
\end{align}
or equivalently,
\begin{align}
    \dot{y}(t)=\mathrm{e}^{-\tau\hat{A}_0}\hat{A}_0 z(t+\tau)+\sum_{i=1}^m\hat{B}_{i0}\xi_i(t)
\end{align} 
where $\hat{B}_{i0}=\mathrm{e}^{-\tau\hat{A}_0}\hat{B}_i$. We also notice
\begin{align}\label{eq:simplification}
    \mathrm{e}^{\pm\tau\hat{A}_0}\hat{A}_0=  (\hat{A}_0\pm\tau\hat{A}_0^2)=\hat{A}_0\mathrm{e}^{\pm\tau\hat{A}_0}=\hat{A}_0.
\end{align}
Using the simplification above, we get
\begin{align}\label{eq:simple-linear}
    \dot{y}(t)&=\hat{A}_0 z(t+\tau)+\sum_{i=1}^m\hat{B}_{i0}\xi_i(t).
\end{align}
Substituting $z(t+\tau)=\mathrm{e}^{\tau\hat{A}_0} y(t)$ into (\ref{eq:simple-linear}) yields
\begin{align}
    \dot{y}(t)&=\hat{A}_0\mathrm{e}^{\tau\hat{A}_0} y(t)+\sum_{i=1}^m\hat{B}_{i0}\xi_i(t)
\end{align}
and re-substituting (\ref{eq:simplification}), finally the delay-free state dynamics is given by
\begin{align}\label{eq:delayfree-dynamics}
    \dot{y}(t)&=\hat{A}_0 y(t)+\sum_{i=1}^m\hat{B}_{i0}\xi_i(t),  y(0)=(I-\tau\hat{A}_0)z(\tau).
\end{align}
%with the given initial condition $y(0)=(I-\tau\hat{A}_0)z(\tau)$.

Now, we apply the conditions obtained from Pontryagin’s principle to the delay-free $m$ OCPs in (\ref{eq:delayfree-dynamics}) and (\ref{eq:transformed-PI}).  
According to Pontryagin’s principle, the Hamiltonian
\begin{align}
    H_i&=\hat{\mathcal{L}}_i(y(t),\xi_i(t),t) +\psi_i^\top(t) \dot{y}(t)%\nonumber\\    &=y^\top(t) Q_{i}y(t) +\xi_i^\top(t)\hat{R}_{i}\xi_i(t) +\psi_i^\top(t) \left(\hat{A}_0y(t)+\sum_{i=1}^m\hat{B}_{i0}\xi_i(t) \right)
\end{align}
is minimized %by player $i$ ($i=1,\cdots,m$) 
with respect to $\xi_i(t)$ (for $i=1,\cdots,m$). This gives the necessary conditions 
\begin{equation}\label{eq:neccesary-u}
   \xi_i(t)=-\hat{R}_i^{-1} B_{i0}^\top \psi_i(t) 
\end{equation}
where the vectors $\psi_i(t)$ satisfy
\begin{equation}\label{eq:co-estate}
    \dot{\psi}_i(t)=-\hat{Q}_i y(t)-\hat{A}_0^\top\psi_i(t), \quad \psi_i(T-\tau)=\hat{Q}_{iT}y(T-\tau).
\end{equation}
Substituting (\ref{eq:neccesary-u}) into (\ref{eq:delayfree-dynamics}) yields
\begin{equation}\label{eq:new-game-dynamics-}
\dot{y}(t)=\hat{A}_0 y(t)-\sum_{i=1}^m\hat{S}_i\psi_i(t), \quad y(0)=y_0
\end{equation}
where $\hat{S}_i=B_{i0}\hat{R}_i^{-1}B_{i0}^\top$.

The delay-free $m$ OCPs in (\ref{eq:delayfree-dynamics}) and (\ref{eq:transformed-PI}) have a solution iff the set of differential equations (\ref{eq:co-estate}) and (\ref{eq:new-game-dynamics-}) with the given boundary conditions above has a solution. Reference~\cite{Engwerda-book} (see \textit{Proof of Theorem 7.1}) shows that these differential equations with their boundary conditions convert to a TPBVP for a two-player differential game. This result can be generalized straightforwardly to the $m$ OCPs in (\ref{eq:delayfree-dynamics}) and (\ref{eq:transformed-PI}) as envisaged in Fig.~\ref{fig:diag}. The optimal control policy $\xi_i(t)$ as well as the associated state trajectory $y(t)$ can be calculated from the TPBVP. %If there is a solution to the OCPs, then for every $y_0$ this TPBVP has a solution. 

For the set of differential equations in (\ref{eq:co-estate}) and (\ref{eq:new-game-dynamics-}) for every initial state $y_0$, the associated state trajectory $y(t)$ with the optimal control policies is given by (see the \textit{proof of Theorem 7.4} in~\cite{Engwerda-book})
\begin{equation}\label{eq:TPBVP}
    y(t)=\hat{H}(T-\tau-t)\hat{H}^{-1}(T-\tau)y_0
\end{equation}
where
$\hat{M}=\begin{bmatrix}
-\hat{A}_0    & \hat{S}_1   &  \cdots & \hat{S}_m\\
\hat{Q}_1   & \hat{A}_0^\top  &    0   &    0\\
\vdots &      0 &    \ddots &    0\\
\hat{Q}_m & 0 & 0 & \hat{A}_0^\top
\end{bmatrix}$ and $\hat{H}(T-\tau-t)=\begin{bmatrix}
        I_{2mq\times 2mq} & 0 & \cdots & 0\end{bmatrix}\mathrm{e}^{(T-\tau-t)\hat{M}}\begin{bmatrix}
    I\\
    \hat{Q}_{1T}\\
    \vdots \\
    \hat{Q}_{mT} 
\end{bmatrix}$.

Using the Kronecker product property (\ref{eq:Kronecker}), we obtain
\begin{align}
    &\hat{Q}_{iT}\nonumber\\&=(I+\tau\hat{A}^\top\otimes I_{q\times q})(I_{2\times 2}\otimes W_{iT}\otimes I_{q\times q}\big)(I+\tau\hat{A}\otimes I_{q\times q})\nonumber\\
    &=\left((I+\tau\hat{A}^\top)(I_{2\times 2}\otimes W_{iT}\big)(I+\tau\hat{A}) \right)\otimes I_{q\times q}\nonumber\\&\triangleq Q_{iT}\otimes I_{q\times q},\label{eq:QiT}\\
    &\hat{Q}_{i}=\nonumber\\&\left((I+\tau\hat{A}^\top)(I_{2\times 2}\otimes W_{i}\big)(I+\tau\hat{A}) \right)\otimes I_{q\times q}\triangleq Q_{i}\otimes I_{q\times q},\label{eq:Qi}\\
  &\hat{S}_i=\Big(\begin{bmatrix}
      0 \\ \hat{b}_i
  \end{bmatrix}\otimes I_{q\times q}\Big)\Big(\frac{1}{ \hat{r}_i}\otimes I_{q\times q}\Big)\Big(\begin{bmatrix}
      0 & \hat{b}_i^\top
  \end{bmatrix}\otimes I_{q\times q}\Big)=\nonumber\\
  &
  \Big(\begin{bmatrix}
      0 \\ \hat{b}_i
  \end{bmatrix} \frac{1}{ \hat{r}_i} \begin{bmatrix}
      0 & \hat{b}_i^\top
  \end{bmatrix}\Big)\otimes I_{q\times q}=
  \begin{bmatrix}
      0 & 0 \\ 0 &  \frac{1}{\hat{r}_i}\hat{b}_i\hat{b}_i^\top
  \end{bmatrix} \otimes I_{q\times q}=\nonumber\\
  &\begin{bmatrix} 0 & 0\\  0 & 1 \end{bmatrix} \otimes
\frac{1}{\hat{r}_i}\hat{b}_i\hat{b}_i^\top \otimes I_{q\times q}\triangleq S_i\otimes I_{q\times q}\label{eq:S}.
\end{align}
%Let $\bm{\delta}_i=\frac{1}{\hat{r}_i}\hat{\bm{b}}_i\hat{\bm{b}}_i^\top=\mathrm{diag}(0,\cdots,\frac{1}{\hat{r}_i},\cdots,0)\in\mathbb{R}^{m\times m}$ and $\bm{S}_i=\begin{bmatrix} 0 & 0\\  0 & 1 \end{bmatrix} \otimes \bm{\delta}_i$ so  $\hat{\bm{S}}_i=\bm{S}_i\otimes\bm{I}_{q\times q}$.
Using (\ref{eq:QiT})-(\ref{eq:S}), the matrices $\hat{M}$ and $\hat{H}(T-\tau)$ can be expanded as $\hat{M}=M\otimes I_{q\times q}$ and $\hat{H}(T-\tau-t)=H(T-\tau-t)\otimes I_{q\times q}$, respectively, where
$M=\begin{bmatrix}
-\hat{A}    & S_1   &  \cdots & S_m\\
Q_1   & \hat{A}^\top  &    0   &    0\\
\vdots &      0 &    \ddots &    0\\
Q_m & 0 & 0 & \hat{A}^\top
\end{bmatrix}$ and $H(T-\tau-t)=\begin{bmatrix}
        I_{2m\times 2m} & 0 & \cdots & 0\end{bmatrix}\mathrm{e}^{(T-\tau-t)M}\begin{bmatrix}
    I\\
   Q_{1T}\\
    \vdots \\
   Q_{mT} 
\end{bmatrix}$.

Therefore, (\ref{eq:TPBVP}) is rewritten as
\begin{align}\label{eq:TPBVP-ext}
    y(t)&=\big(H(T-\tau-t)\otimes I_{q\times q}\big)\big(H^{-1}(T-\tau)\otimes I_{q\times q}\big)y_0 \nonumber\\
    &=\big(H(T-\tau-t)H^{-1}(T-\tau)\otimes I_{q\times q}\big)y_0.
\end{align}
%For the differential game defined in (\ref{eq:new-game-dynamics}) and (\ref{eq:new-quadratic-cost}) for every initial state $y_0$, the associated state trajectory $y(t)$ with the unique Nash equilibrium actions.

From (\ref{eq:TPBVP-ext}), it is obvious that the existence of a unique solution to the TPBVP depends on whether or not $H(T-\tau)$ is invertible. If invertible, there exists a unique open-loop Nash equilibrium for the DGG problem. Besides, if the individual trajectories $y_i(t)$ in (\ref{eq:TPBVP-ext}) do not require knowledge of $y_j(0),j\neq i$ from $y_0$, one can conclude that the equilibrium and its associated state trajectories for the DGG problem are distributed. In the following, we prove the invertibility of $H(T-\tau)$ i.e., the existence of unique optimal control policies and present individual state trajectories $y_i(t)$ in the form of explicit expressions of time, delay, and the initial state $y_i(0)$. Before, the following definitions were given.

Let $\lambda_i$ denote the $i$th eigenvalue of $M$. Define $f(\lambda_i)\triangleq f_1(\lambda_i)+f_2(\lambda_i)$ where $f_1(\lambda_i)\triangleq(\frac{\omega_i}{\hat{r}_i}+\frac{\mu_i}{\hat{r}_i}\tau)\frac{1}{\lambda_i^2}-\lambda_i$ and $f_2(\lambda_i)\triangleq(\frac{\omega_i}{\hat{r}_i}\tau+\frac{\mu_i}{\hat{r}_i}(\tau^2+1))\frac{1}{\lambda_i}$. Also, $g(\lambda_i)\triangleq g_1(\lambda_i)+g_2(\lambda_i)$
where $g_1(\lambda_i)\triangleq\frac{\omega_i}{\hat{r}_i}\tau\frac{1}{\lambda_i^2}+1$ and $g_2(\lambda_i)\triangleq\frac{\omega_i}{\hat{r}_i}(\tau^2+1)\frac{1}{\lambda_i}$. Notice that $f_1(-\lambda_i)=f_1(\lambda_i)$, $f_2(-\lambda_i)=-f_2(\lambda_i)$, $g_1(-\lambda_i)=g_1(\lambda_i)$, and $g_2(-\lambda_i)=-g_2(\lambda_i,\tau)$. Define 
\begin{equation}\label{eq:varpi}
    \sigma_i(\lambda_i)=[0,\cdots,\varpi_i(\lambda_i),\cdots,0]^\top,  \varpi_i^2(\lambda_i)=\frac{1}{2}\frac{\lambda_i^3}{\frac{\mu_i}{r_i} -\lambda_i^4}.
  \end{equation}

\begin{theorem}\label{theorem} $H(T-\tau)$ is nonsingular and 
\begin{align}\label{eq:trajectory}
y_i(t)&=\left(\frac{1}{\Delta_i(T-\tau)}\begin{bmatrix}
    q_i(t) & \hat{q}_i(t) \\
   \Tilde{q}_i(t) & \bar{q}_i(t)     
\end{bmatrix}\otimes I_{q\times q}\right)y_i(0)
\end{align}
where 
\small
\begin{align*}
   &\Delta_i(T-\tau)= \beta_i(T-\tau)\gamma_i(T-\tau)-\alpha_i(T-\tau)\eta_i(T-\tau),\\
   &q_i(t)=\beta_i(T-\tau-t)\gamma_i(T-\tau)-\alpha_i(T-\tau-t)\eta_i(T-\tau),\\
   &\hat{q}_i(t)=\alpha_i(T-\tau-t)\beta_i(T-\tau)-\beta_i(T-\tau-t)\alpha_i(T-\tau),\\
   &\Tilde{q}_i(t)=\eta_i(T-\tau-t)\gamma_i(T-\tau)-\gamma_i(T-\tau-t)\eta_i(T-\tau),\\
   &\bar{q}_i(t)=\gamma_i(T-\tau-t)\beta_i(T-\tau)-\eta_i(T-\tau-t)\alpha_i(T-\tau),\\
&\alpha_i(\phi)=4\Re(\varpi_i^2(\lambda_i)\Big[f_1(\lambda_i)\sinh(\phi\lambda_i)+f_2(\lambda_i)\cosh(\phi\lambda_i)\Big]),  \\
&\beta_i(\phi)=4\Re(\varpi_i^2(\lambda_i)\Big[g_1(\lambda_i)\sinh(\phi\lambda_i)+g_2(\lambda_i)\cosh(\phi\lambda_i)\Big]), \\
&\gamma_i(\phi)=-4\Re(\lambda_i\varpi_i^2(\lambda_i)\Big[f_1(\lambda_i)\sinh(\phi\lambda_i)+f_2(\lambda_i)\cosh(\phi\lambda_i)\Big]), \\
&\eta_i(\phi)=-4\Re(\lambda_i\varpi_i^2(\lambda_i)\Big[g_1(\lambda_i)\sinh(\phi\lambda_i)+g_2(\lambda_i)\cosh(\phi\lambda_i)\Big]). 
\end{align*}
\normalsize
\end{theorem}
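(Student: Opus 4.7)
The approach is to exploit the Kronecker and sparsity structure of $M$ to decouple the large Hamiltonian system into $m$ independent per-edge $4\times 4$ subsystems, then extract the matrix exponential of each block via its spectral decomposition. Because $\hat A$ is edge-diagonal by construction, each $S_i$ is supported on edge $i$ alone through $\hat b_i\hat b_i^{\top}$, and each $Q_i$ reduces to a per-edge block under the simplification $\hat r_k = r_i = r_j$, a permutation of the rows and columns of $M$ recasts it as a direct sum $\bigoplus_{i=1}^{m} M_i$ with each $M_i\in\mathbb{R}^{4\times 4}$ of Hamiltonian form. The same permutation splits the right-hand stack $[I, Q_{1T}, \dots, Q_{mT}]^{\top}$ into per-edge stacks $[I_2, Q_{iT,e}]^{\top}$, so $H(T-\tau-t)$ decomposes as a direct sum of $2\times 2$ blocks $H_i(T-\tau-t)$, and invertibility of $H(T-\tau)$ is equivalent to $\Delta_i(T-\tau)\neq 0$ for every $i$.

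The key steps would be: (i) verify the block-diagonalization using the mixed-product identity $(A\otimes B)(C\otimes D)=AC\otimes BD$ together with the single-entry sparsity of $\hat b_i\hat b_i^{\top}$, producing per-edge Hamiltonians that depend only on the scalars $\mu_i, \omega_i, \hat r_i, \tau$; (ii) compute the characteristic polynomial of $M_i$, which turns out to be biquadratic in $\lambda$, so its roots form $\pm$-pairs, defining $\lambda_i$ and identifying $\varpi_i(\lambda_i)$ in (\ref{eq:varpi}) as the eigenvector-normalization factor; (iii) build $e^{\phi M_i}$ via spectral decomposition and fold the sum over $\pm\lambda_i$ eigenpairs into $\sinh(\phi\lambda_i)$ and $\cosh(\phi\lambda_i)$ by taking real parts, with the $\Re(\cdot)$ operator absorbing any complex-conjugate eigenvalue pairs; (iv) multiply by the terminal-stack $[I_2, Q_{iT,e}]^{\top}$ and extract the top $2\times 2$ block, matching the entries to $\alpha_i, \beta_i, \gamma_i, \eta_i$ through the scalar functions $f_{1,2}$ and $g_{1,2}$; (v) invert the resulting $2\times 2$ block by Cramer's rule to obtain the cofactors $q_i, \hat q_i, \tilde q_i, \bar q_i$ and the determinant $\Delta_i$; (vi) invoke strict convexity of each per-edge OCP --- already obtained from $\hat W_i \succeq 0$, $\hat R_i \succ 0$ and the Engwerda-style argument given in the preceding paragraphs of Section~\ref{sec:main} --- to conclude $\Delta_i(T-\tau)\neq 0$ for every finite horizon. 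Substituting these closed forms into (\ref{eq:TPBVP-ext}) then yields the claimed trajectory formula for $y_i(t)$.

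The main obstacle is step (i): producing a genuinely edge-wise decoupling of the $M$-dynamics. Sparsity of $\hat S_i$ confines the $\psi_i \to y$ feedback to edge $i$ and the Kronecker form of $\hat A$ keeps the $y$-dynamics edge-diagonal, but $Q_i$ as written in (\ref{eq:Qi}) is built from $W_i$ and generally spreads across the edges in $\mathcal{N}_i$; only after the per-edge reduction of weights and the relabeling of $(\omega_i, \mu_i)$ as per-edge scalars, as implicit in the theorem statement, does the clean per-edge block $Q_{i,e}$ emerge. Once that algebraic reduction is in place, the remaining work is a long but mechanical $4\times 4$ computation --- spectral decomposition of $M_i$, $\sinh/\cosh$ simplification of $e^{\phi M_i}$, and Cramer inversion of the resulting $2\times 2$ block --- which must close out exactly to the $\alpha_i, \beta_i, \gamma_i, \eta_i$ expressions of the theorem.
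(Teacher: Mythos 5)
Your step (i) is where the argument breaks. The matrix $M$ has one $2m$-dimensional block for $y$ and one $2m$-dimensional block for each costate $\psi_i$, $i=1,\dots,m$, so it is $2m(m+1)\times 2m(m+1)$; it cannot be permutation-similar to a direct sum $\bigoplus_{i=1}^m M_i$ of $m$ blocks of size $4$ (total $4m$). Moreover, as you yourself note, $Q_i=\begin{bmatrix}1&\tau\\ \tau&\tau^2+1\end{bmatrix}\otimes W_i$ couples $\psi_i$ to every edge in the support of $W_i$, so no permutation makes $M$ block-diagonal per edge. Your escape hatch --- a ``per-edge reduction of weights \ldots implicit in the theorem statement'' --- is not an available hypothesis; it is precisely the decoupling that has to be proved. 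The mechanism that actually delivers it (absent from your proposal) is that the $y$-equation reads only the edge-$i$ component of $\psi_i$, since $\hat S_i=\begin{bmatrix}0&0\\0&1\end{bmatrix}\otimes\frac{1}{\hat r_i}\hat b_i\hat b_i^\top$ has a single nonzero scalar entry; consequently the subsystem $(y^{(i)},\psi_i^{(i)})$ closes on itself with only the scalars $\mu_i=[W_i]_{ii}$, $\omega_i=[W_{iT}]_{ii}$, $\hat r_i$, while the off-edge costate components $\psi_i^{(j)}$, $j\neq i$, are driven by $y$ but never feed back. Under a permutation $M$ is therefore block lower-triangular, not a direct sum, and only the projection onto $y$ (together with the terminal stack) sees the $m$ decoupled $4\times 4$ Hamiltonians; the leftover $2m(m-1)$ spectator directions are exactly the defective zero-eigenvalue part of $M$, which is why a ``spectral decomposition'' exists for the per-edge blocks but not for $M$ itself. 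Without this observation (or an equivalent one), everything downstream --- the biquadratic characteristic polynomial, the $\sinh/\cosh$ folding, the $2\times 2$ Cramer inversion --- rests on an unproved premise.

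For comparison, the paper does not decouple the dynamics at all: it Jordan-decomposes the full defective $M$ (Appendices A--F), shows the zero-eigenvalue Jordan chains are annihilated in $H(\phi)$ by the left selector $[\,I\ 0\ \cdots\ 0\,]$ and the stack $[I;Q_{1T};\cdots;Q_{mT}]$, and obtains the per-edge structure only at the level of $H(\phi)$ through the rank-one factor $\sigma_i(\lambda_i)\sigma_i^\top(\lambda_i)$ in each $K(\lambda_i)$; from there its steps coincide with your (iii)--(v). If you repair step (i) along the lines above, your route is legitimate and in fact more economical, and your convexity-based argument for $\Delta_i(T-\tau)\neq 0$ would be more substantive than the paper's bare assertion that the quadratic factors of $\det(H(\phi)-\delta I)$ have no zero roots --- though you would still need to make it precise, e.g.\ via solvability of the per-edge Riccati equation on $[0,T-\tau]$ with positive semidefinite weights and $\hat r_i>0$, which yields invertibility of the corresponding $2\times 2$ block $H_i(T-\tau)$. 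As written, however, the central decoupling claim is asserted rather than established, so the proof is incomplete.
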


\begin{proof}
To extract the explicit expressions for $y_i(t)$ from (\ref{eq:TPBVP-ext}), one has to find the inverse of $H(\phi)$. The invertibility of $H(\phi)$ depends on $M$ and $\phi$. We begin with the decomposition of the square matrix $M$ into the product of matrices. Matrix $M$ is defective (i.e., non-diagonalizable, see Appendix~\ref{app:defective}). Its Jordan decomposition is
\begin{equation}\label{eq:jordan}
M= \Phi J \Psi                         
\end{equation}
where $\Phi$, $J$, and $\Psi$ are square matrices given in Appendix~\ref{app:decompose}. 

The matrix exponential term $\mathrm{e}^{\phi M}$ in $H(\phi)$ is expanded as 
\begin{equation}\label{eq:matrix-exponential}
    \mathrm{e}^{\phi M}=\Phi \mathrm{e}^{\phi J} \Psi.
\end{equation}
Using this expansion, as shown in Appendix~\ref{app:decomposeH}, we have  
\begin{align}\label{eq:expanH}
   H(\phi)=\sum_{i=1}^{m}\Big(K(\lambda_i)+K(\overline{\lambda}_i)+K(-\lambda_i)+K(-\overline{\lambda}_i)\Big)
\end{align}
where
\begin{align}\label{eq:matrixK}
   K(\lambda_i)&=\mathrm{e}^{\phi\lambda_i}\begin{bmatrix}
  f(\lambda_i)
  &  
 g(\lambda_i) \\
   -\lambda_i f(\lambda_i)
  &  
  -\lambda_i g(\lambda_i)
 \end{bmatrix} \otimes \sigma_i(\lambda_i) \sigma_i^\top(\lambda_i)
\end{align}
and $f(\lambda_i)$, $g(\lambda_i)$, and $\sigma_i(\lambda_i)$ are defined beforehand. 

As $\overline{\lambda}_i$ are $\lambda_i$ the complex conjugate of each other, we have
\begin{align}
  \Re (K(\overline{\lambda}_i))=\Re(K(\lambda_i)),\quad
  \Im (K(\overline{\lambda}_i))=-\Im(K(\lambda_i))
\end{align}
where $\Re(.) $ and $\Im(.) $ denote the real and imaginary parts, respectively. Thereby, (\ref{eq:expanH}) is simplified further as
\begin{equation}
H(\phi)=2 \sum_{i=1}^{m} \big(\Re(K(\lambda_i))+\Re(K(-\lambda_i))\big).
\end{equation}
Therefore, $H(\phi)$ is a real-valued matrix. 

From (\ref{eq:matrixK}) we can see that $K(\lambda_i)$  has the structure (\ref{eq:matrixKK}), shown at the bottom of the page. 
\begin{figure*}[!b]%[t!]
\hrule
\begin{align}\label{eq:matrixKK}
   K(\lambda_i)&=\begin{bmatrix}
  \mathrm{diag}(0,\cdots,\mathrm{e}^{\phi\lambda_i}\varpi_i^2(\lambda_i)f(\lambda_i),\cdots,0) &  \mathrm{diag}(0,\cdots,\mathrm{e}^{\phi\lambda_i}\varpi_i^2(\lambda_i)g(\lambda_i),\cdots,0) \\
  \mathrm{diag}(0,\cdots,-\lambda_i \mathrm{e}^{\phi\lambda_i}\varpi_i^2(\lambda_i)f(\lambda_i),\cdots,0) &  \mathrm{diag}(0,\cdots,-\lambda_i \mathrm{e}^{\phi\lambda_i}\varpi_i^2(\lambda_i)g(\lambda_i),\cdots,0)  
\end{bmatrix}
\end{align}
\end{figure*}
On this basis, $H(\phi)$ has the following form 
\begin{align}\label{eq:matrixH}
   H(\phi)&=\begin{bmatrix}
  \mathrm{diag}(\cdots,\alpha_i(\phi),\cdots) &  \mathrm{diag}(\cdots,\beta_i(\phi),\cdots) \\
  \mathrm{diag}(\cdots,\gamma_i(\phi),\cdots) &  \mathrm{diag}(\cdots,\eta_i(\phi),\cdots)  
\end{bmatrix}
\end{align}
where $\alpha_i(\phi)$, $\beta_i(\phi)$, $\gamma_i(\phi)$, $\eta_i(\phi)$ are given by (\ref{eq:alpha-formula})-(\ref{eq:eta-formula}), shown at the bottom of the next page.
\begin{figure*}[!b]%[t!]
\begin{align}
\alpha_i(\phi)=&\mathrm{e}^{\phi\lambda_i}\varpi_i^2(\lambda_i)f(\lambda_i)+\mathrm{e}^{-\phi\lambda_i}\varpi_i^2(-\lambda_i)f(-\lambda_i)+\mathrm{e}^{\phi\overline{\lambda}_i}\varpi_i^2(\overline{\lambda}_i)f(\overline{\lambda}_i)+\mathrm{e}^{-\phi\overline{\lambda}_i}\varpi_i^2(-\overline{\lambda}_i)f(-\overline{\lambda}_i) \label{eq:alpha-formula},\\
    \beta_i(\phi)=&\mathrm{e}^{\phi\lambda_i}\varpi_i^2(\lambda_i)g(\lambda_i)+\mathrm{e}^{-\phi\lambda_i}\varpi_i^2(-\lambda_i)g(-\lambda_i)+\mathrm{e}^{\phi\overline{\lambda}_i}\varpi_i^2(\overline{\lambda}_i)g(\overline{\lambda}_i,\tau)+\mathrm{e}^{-\phi\overline{\lambda}_i}\varpi_i^2(-\overline{\lambda}_i)g(-\overline{\lambda}_i) ,\\
    \gamma_i(\phi)=&-\lambda_i\mathrm{e}^{\phi\lambda_i}\varpi_i^2(\lambda_i) f(\lambda_i)-\lambda_i\mathrm{e}^{-\phi\lambda_i} \varpi_i^2(-\lambda_i)f(-\lambda_i)-\overline{\lambda}_i\mathrm{e}^{\phi\overline{\lambda}_i}\varpi_i^2(\overline{\lambda}_i) f(\overline{\lambda}_i)-\overline{\lambda}_i\mathrm{e}^{-\phi\overline{\lambda}_i} \varpi_i^2(-\overline{\lambda}_i)f(-\overline{\lambda}_i), \\
    \eta_i(\phi)=&-\lambda_i \mathrm{e}^{\phi\lambda_i}\varpi_i^2(\lambda_i)g_i(\lambda_i)- \lambda_i\mathrm{e}^{-\phi\lambda_i}\varpi_i^2(-\lambda_i)g_i(-\lambda_i)-\overline{\lambda}_i \mathrm{e}^{\phi\overline{\lambda}_i}\varpi_i^2(\overline{\lambda}_i)g_i(\overline{\lambda}_i)- \overline{\lambda}_i\mathrm{e}^{-\phi\overline{\lambda}_i}\varpi_i^2(-\overline{\lambda}_i)g_i(-\overline{\lambda}_i)  \label{eq:eta-formula}
\end{align}
\end{figure*}

To find its inverse, first, we show that $H(\phi)$ is nonsingular. The eigenvalues of $H(\phi)$ are the roots of its characteristic polynomial 
\begin{align*}
 &\rho(\delta)= \det\big(H(\phi)-\delta I\big) =\det \Big(\mathrm{diag}(\cdots,\\
 &\big(\alpha_i(\lambda_i)-\delta\big)\big(\eta_i(\lambda_i)-\delta\big)-\beta_i(\lambda_i)\gamma_i(\lambda_i),\cdots)\Big) =\prod_{i=1}^{m}\Big(\delta^2 -\\
 &\big(\alpha_i(\lambda_i)+\eta_i(\lambda_i)\big)\delta + \alpha_i(\lambda_i)\eta_i(\lambda_i)- \beta_i(\lambda_i)\gamma_i(\lambda_i)\Big).
\end{align*}
None of the roots of the quadratic expression above are zero, meaning that $H(\phi)$ has no zero eigenvalues and is nonsingular. %Thus, the state trajectory $y(t)$ exists and therefore the optimal control actions $\xi$s and vice versa. 
As its blocks (which are diagonal matrices) are commuting matrices, the inverse of $H(\phi)$ is obtained by matrix analyses as (\ref{eq:inverseH}), shown at the bottom of the next page.
\begin{figure*}[!b]%[t!]
\begin{align}  \label{eq:inverseH} 
   H^{-1}(\phi)=\begin{bmatrix}
  \mathrm{diag}(\cdots,-\frac{\eta_i(\phi)}{\beta_i(\phi)\gamma_i(\phi)-\alpha_i(\phi)\eta_i(\phi)},\cdots) &  \mathrm{diag}(\cdots,\frac{\beta_i(\phi)}{\beta_i(\phi)\gamma_i(\phi)-\alpha_i(\phi)\eta_i(\phi)},\cdots) \\
  \mathrm{diag}(\cdots,\frac{\gamma_i(\phi)}{\beta_i(\phi)\gamma_i(\phi)-\alpha_i(\phi)\eta_i(\phi)},\cdots) &  \mathrm{diag}(\cdots,-\frac{\alpha_i(\phi)}{\beta_i(\phi)\gamma_i(\phi)-\alpha_i(\phi)\eta_i(\phi)},\cdots)  
\end{bmatrix}.
\end{align}
\end{figure*}

Knowing that $\Re(\mathrm{e}^{\phi\overline{\lambda}_i})=\Re(\mathrm{e}^{\phi\lambda_i})$, $\Re(\varpi_i^2(\overline{\lambda}_i))=\Re(\varpi_i^2(\lambda_i))$, and $\Re(f(\overline{\lambda}_i))=\Re(f(\lambda_i))$, finally, we can express $\alpha_i(\phi)$, $\beta_i(\phi)$, $\gamma_i(\phi)$, $\eta_i(\phi)$ in terms of the hyperbolic sine and cosine in (\ref{eq:trajectory}). For $\alpha_i(\phi)$, it is given in (\ref{eq:alpha-simplification}), shown at the bottom of the next page, and $\beta_i(\phi)$, $\gamma_i(\phi)$, and $\eta_i(\phi)$ are obtained similarly.
\begin{figure*}[!b]%[t!]   
\hrule
\begin{align}\label{eq:alpha-simplification}
\alpha_i(\phi)=&2\Re(\mathrm{e}^{\phi\lambda_i})\Re(\varpi_i^2(\lambda_i))\Re(f(\lambda_i))+2\Re(\mathrm{e}^{-\phi\lambda_i})\Re(\varpi_i^2(-\lambda_i))\Re(f(-\lambda_i))
\nonumber\\
    =&2\Re(\mathrm{e}^{\phi\lambda_i})\Re(\varpi_i^2(\lambda_i))\Big(\Re(f_1(\lambda_i))+\Re(f_2(\lambda_i))\Big)-2\Re(\mathrm{e}^{-\phi\lambda_i})\Re(\varpi_i^2(\lambda_i))\Big(\Re(f_1(\lambda_i))-\Re(f_2(\lambda_i))\Big)\nonumber \\
    =&4\Re(\varpi_i^2(\lambda_i))\Big[\Re(f_1(\lambda_i))\frac{\Re(\mathrm{e}^{\phi\lambda_i})-\Re(\mathrm{e}^{-\phi\lambda_i})}{2}+\Re(f_2(\lambda_i))\frac{\Re(\mathrm{e}^{\phi\lambda_i})+\Re(\mathrm{e}^{-\phi\lambda_i})}{2}\Big]\nonumber \\
    =&4\Re(\varpi_i^2(\lambda_i))\Big[\Re(f_1(\lambda_i,\tau))\Re(\sinh(\phi\lambda_i))+\Re(f_2(\lambda_i))\Re(\cosh(\phi\lambda_i))\Big]
\end{align}
\end{figure*}
Finally, multiplying $H(T-\tau-t)$ from (\ref{eq:matrixH}) to $H^{-1}(T-\tau)$ from (\ref{eq:inverseH}) yields (\ref{eq:trajectory}) and the proof is concluded.  
\end{proof}

For $t<\tau$, $\xi_i(t)=0$ for all $i\in\{1,\cdots,m\}$ and $z(t)=(I+t \hat{A}_0)y(0)$. For $t\geq \tau$, from (\ref{eq:transform-dynamics}), $z(t+\tau)=(I+\tau \hat{A}_0)y(t)$. Equivalently,  
\begin{align}
&z_i(t)=\left(\begin{bmatrix} 1 & t \\ 0 & 1\end{bmatrix}\otimes I_{q\times q}\right)y_i(0), \quad t<\tau,\\
&z_i(t+\tau)=\left(\begin{bmatrix} 1 & \tau \\ 0 & 1\end{bmatrix}\otimes I_{q\times q}\right)y_i(t), \quad t\geq \tau.
\end{align}

The control inputs $\xi_i(t)$ for $t\geq \tau$ are obtained from the double-integrator relations,  
\begin{align}\label{eq:equilibrium}
&\xi_i(t+\tau)=\begin{bmatrix}0&I_{q\times q}
\end{bmatrix}\dot{y}_i(t), \quad t\geq \tau. 
\end{align}

As shown in Fig.~\ref{fig:diag}, the proposed systematic approach to solving the DGG problem accomplishes this by using the optimal control policies $\xi_i(t)$ and their corresponding state trajectories $z_k(t)$ for $\forall i\in\{1,\cdots,m\}$ in the $m$-edge system to find the distributed explicit open-loop Nash equilibrium actions $u_j(t)$ and their corresponding distributed state trajectories $x_j(t)$ for $\forall j\in\{1,\cdots,N\}$ in the $N$-player DGG.

\section{Illustrative Example}\label{sec:sim}

In this section, we present an illustrative example to verify the distributed solution given in \textit{Theorem~\ref{theorem}}. Assume a consensus-seeking MAS $\mathcal{V}=\{0,1,2,3\}$ and $\mathcal{E}=\{(0,1),(1,2),(1,3)\}$. The communication graph is not complete, and agents 2 and 3 cannot acquire global knowledge of the initial state vector $x_0$.

The lead agent is determined by $p_0(t)=[\cos(t),t]^\top$, $\dot{p}_0(t)=[-\sin(t),1]^\top$, and $\dot{u}_0(t)=[-\cos(t),0]^\top$. The initial states of the other agents are $p_1(0)=[-1,1]^\top$, $\dot{p}_1(0)=[0,2]^\top$, $p_2(0)=[4,4]^\top$, $\dot{p}_2(0)=[0,0]^\top$, $p_3(0)=[6,9]^\top$, and $\dot{p}_3(0)=[2,0]^\top$. In the PIs, $W_{1T}=W_{2T}=W_{3T}=I_{3\times 3}$, $W_1=\mathrm{diag}(1,0.7,0.5)$, $W_2=\mathrm{diag}(0,0.7,0)$, $W_3=\mathrm{diag}(0,0,0.5)$, $r_1=r_2=r_3=1$, and $T=8$. 

Following the proposed approach in Fig.~\ref{fig:diag}, the $3$-player DGG converts to the TPBVP for which the explicit solution is given in (\ref{eq:trajectory}). The distributed Nash strategies and their associated distributed state trajectories are $u_1(t)=u_0(t)+\xi_1(t)$, $u_2(t)=\xi_1(t)+u_1(t)$, $u_3(t)=\xi_2(t)+u_1(t)$, $x_1(t)=x_0(t)+z_1(t)$, $x_2(t)=x_1(t)+z_2(t)$, $x_3(t)=x_2(t)+z_3(t)$ where $x_i(t)=[p_i^\top(t),\dot{p}_i^\top(t)]^\top$. For comparison, we employ the non-distributed solution (\ref{eq:TPBVP-ext}). Figs.~\ref{fig:traj}-\ref{fig:histdelay} show the agents' trajectories and time histories generated by both the distributed and non-distributed solutions for delay-free DGG and delayed DGG. It is seen that the agents $\{1,2,3\}$' positions, velocities, and control inputs reach a consensus with the lead agent. 

From Figs.~\ref{fig:traj}-\ref{fig:histdelay}, we observe that the behavior of agent $1$ is identical under both the non-distributed and distributed solutions. Notice from the communication graph $\mathcal{E}$ that agent $1$ has access to the global knowledge of all other agents, i.e., $\{0,2,3\}$, while they do not. As seen, agents $2$ and $3$ have different behaviors under both solutions. When they implement the distributed solution, these two agents converge to the lead agent's trajectory on a shorter path than when they implement the non-distributed solution.

\begin{figure}[ht]
   \centering
     \begin{subfigure}[b]{0.4\textwidth}
         \centering
         \includegraphics[width=\textwidth]{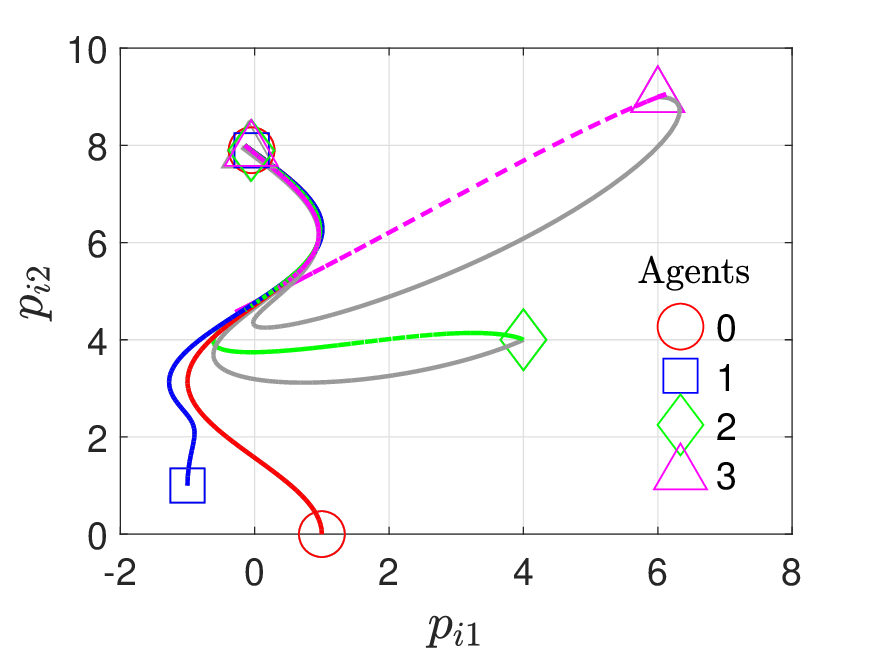}
         \caption{delay-free trajectories}
         %\label{fig:sim1}
     \end{subfigure}
     \begin{subfigure}[b]{0.4\textwidth}
         \centering
         \includegraphics[width=\textwidth]{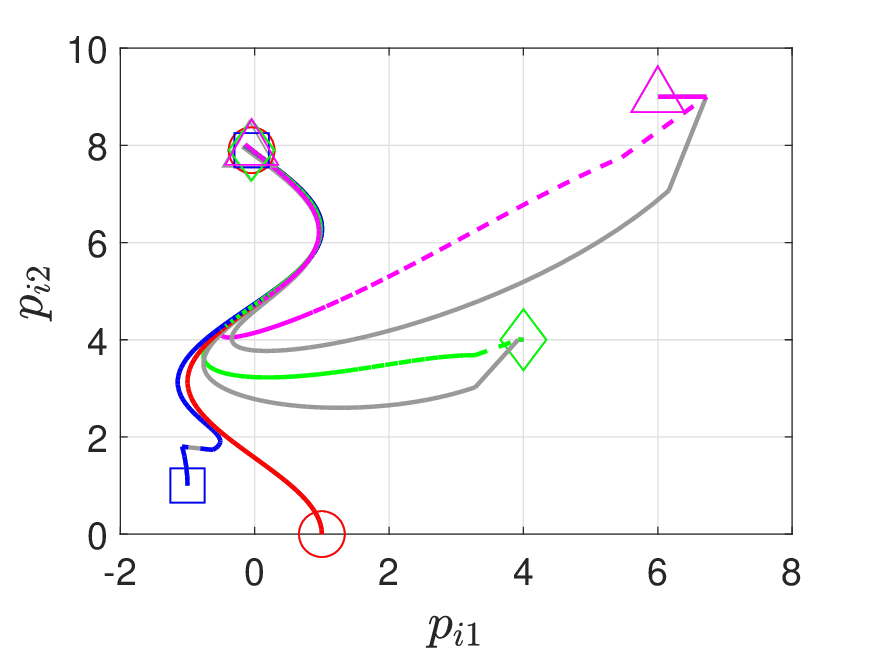}
         \caption{delayed ($\tau=0.5$) trajectories}
         %\label{fig:sim2}
     \end{subfigure}
\caption{Agents' trajectories under the non-distributed (solid gray lines) and distributed solutions (dashed color lines).} 
\label{fig:traj}
\end{figure}

\begin{figure*}[ht]
   \centering
   \includegraphics[width=0.25\textwidth]{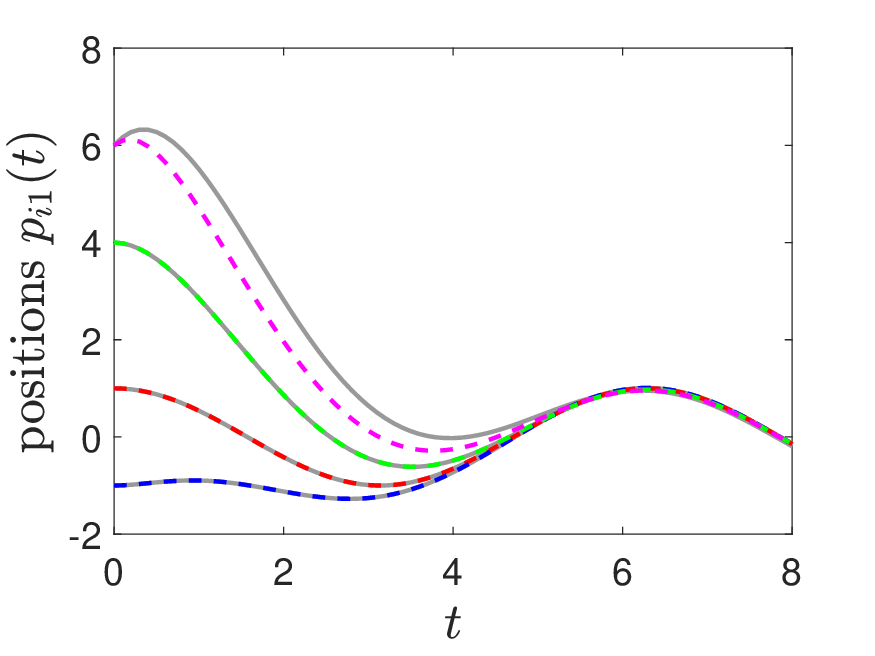}
   \includegraphics[width=0.25\textwidth]{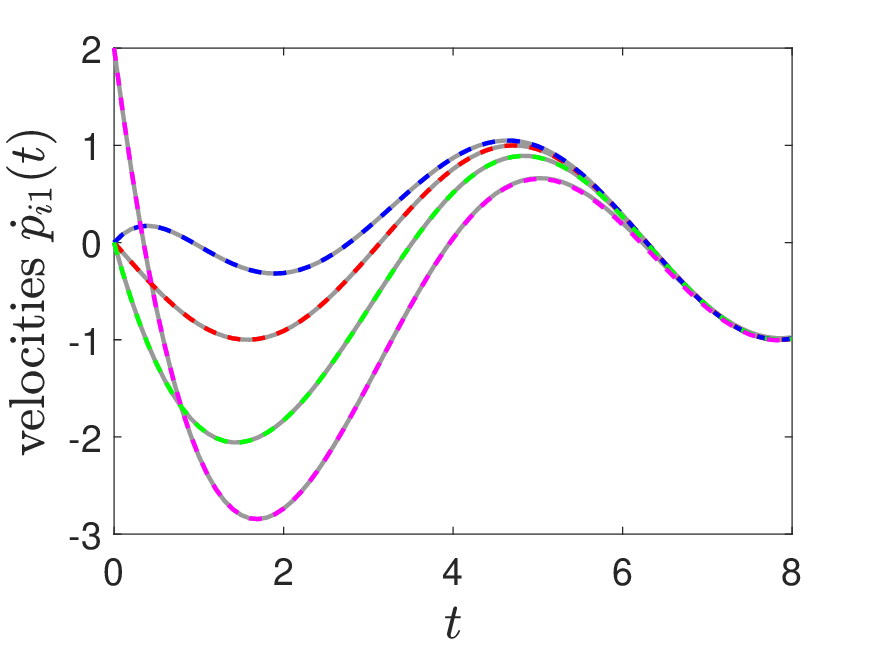}
   \includegraphics[width=0.25\textwidth]{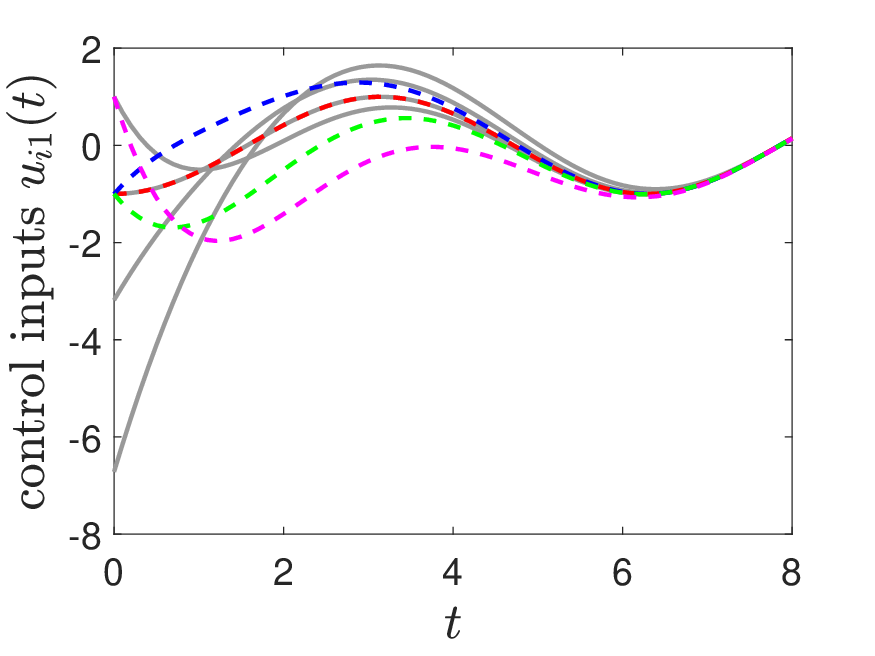}
   \includegraphics[width=0.25\textwidth]{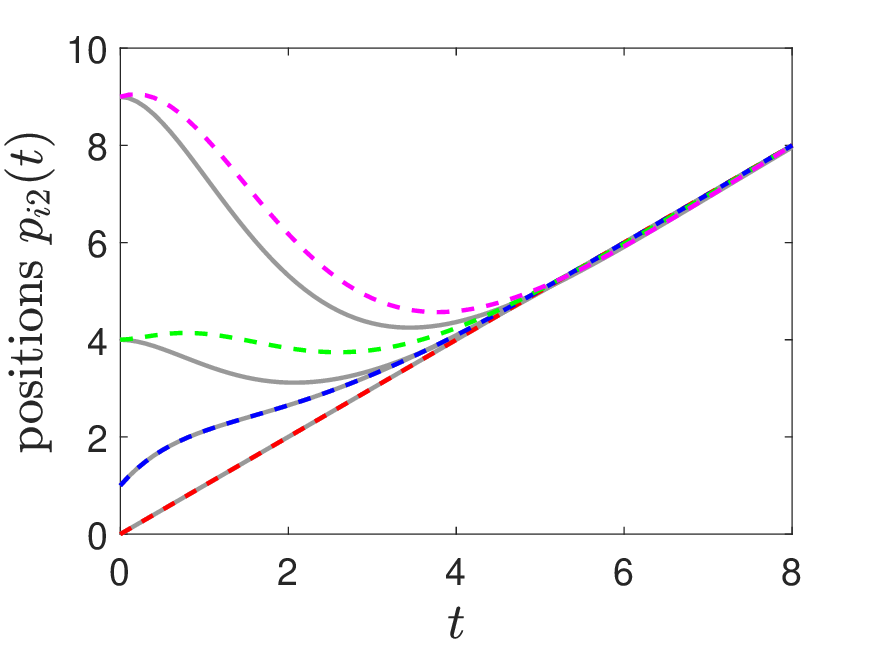}
   \includegraphics[width=0.25\textwidth]{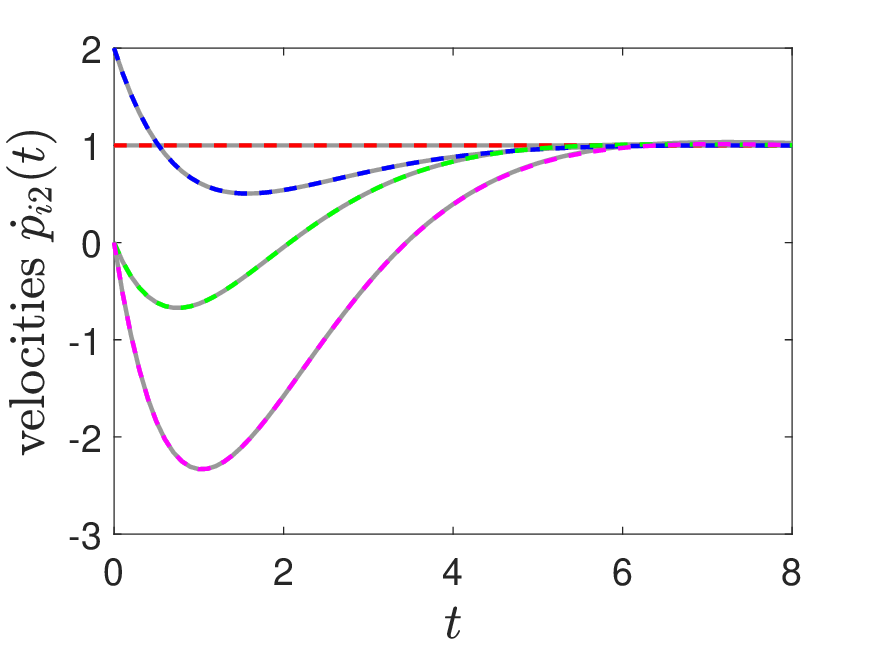}
   \includegraphics[width=0.25\textwidth]{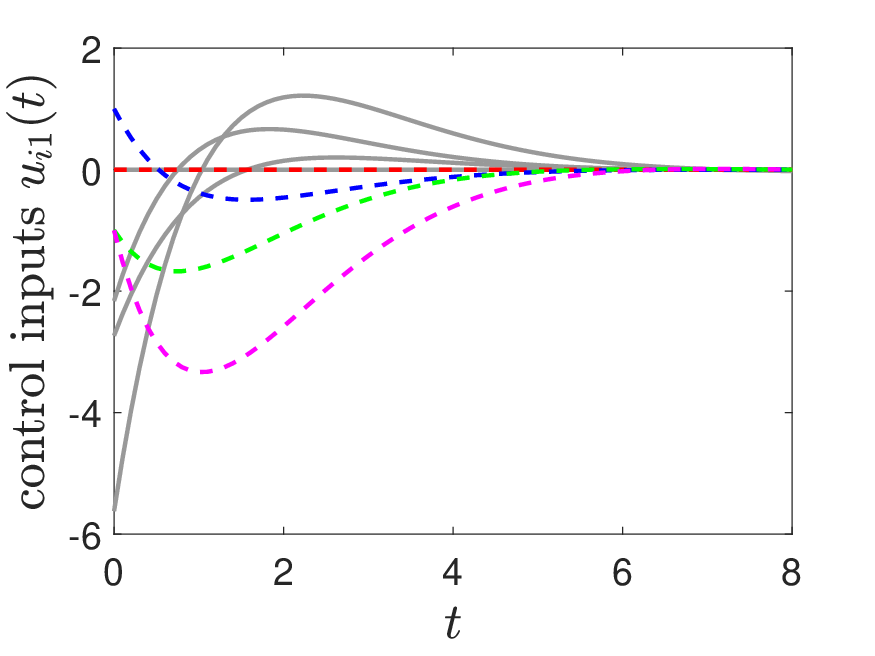}
\caption{Delay-free time histories under the non-distributed and distributed solutions. }
\label{fig:hist}
\end{figure*}

\begin{figure*}[ht]
   \centering
   \includegraphics[width=0.25\textwidth]{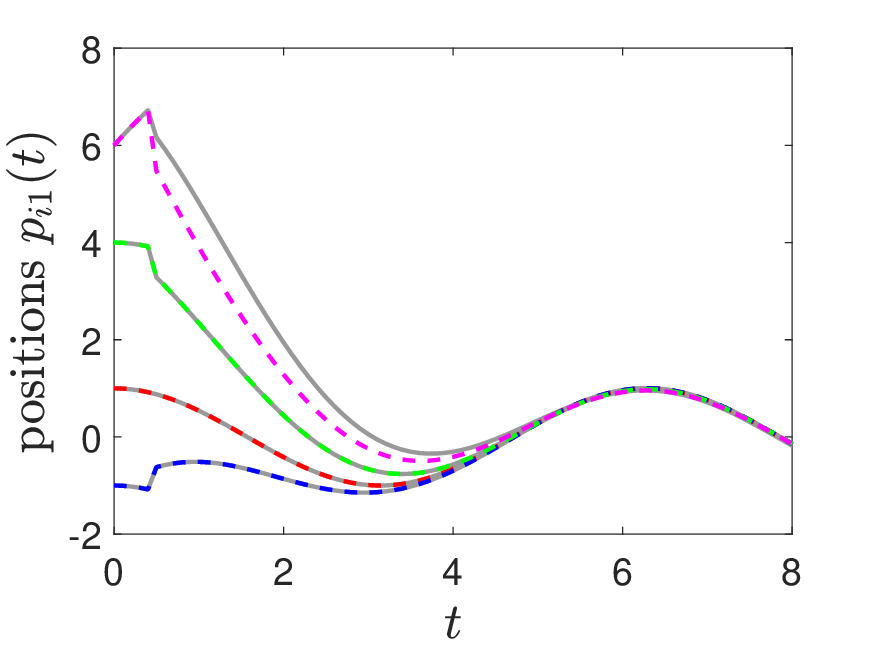}
   \includegraphics[width=0.25\textwidth]{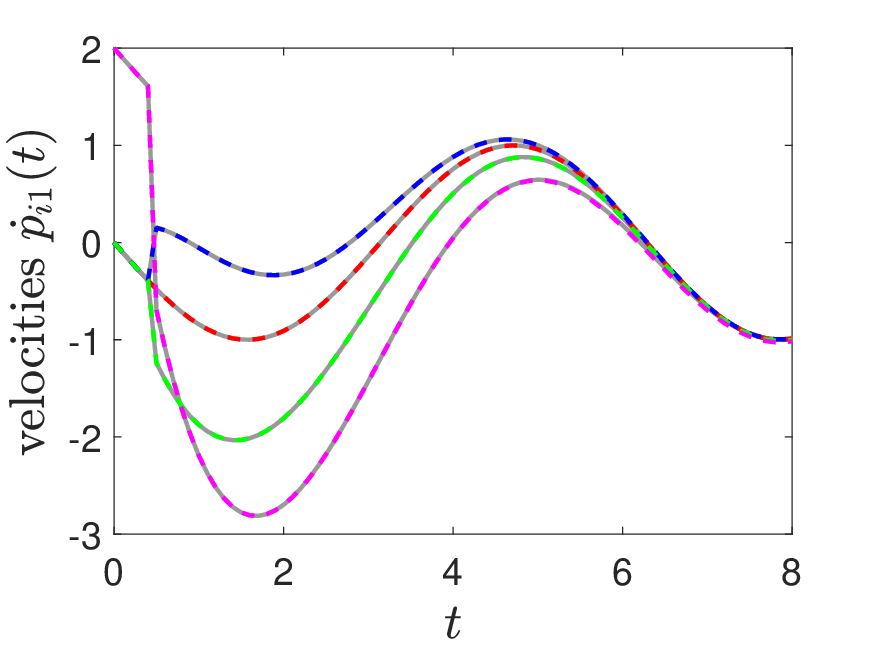}
   \includegraphics[width=0.25\textwidth]{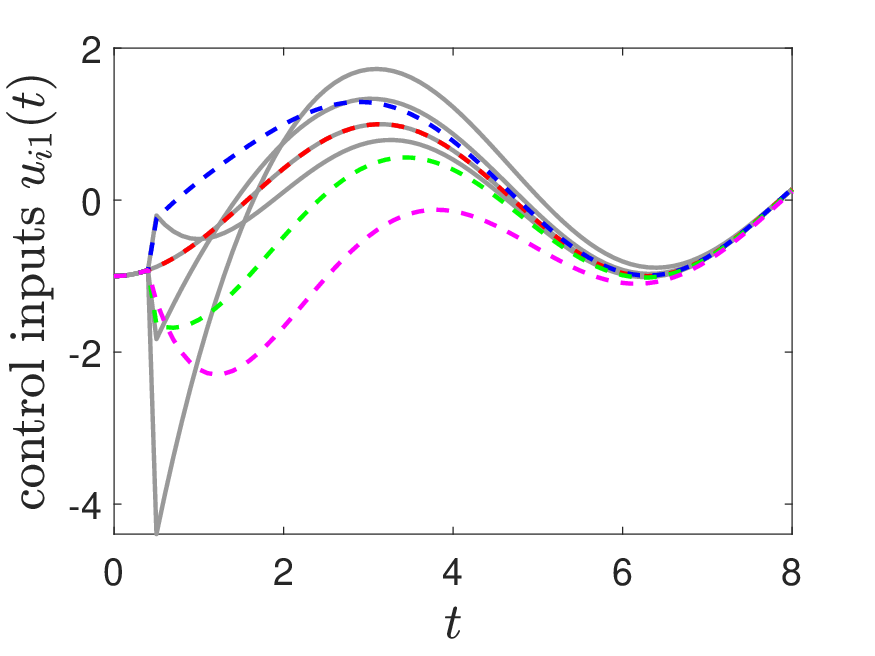}
   \includegraphics[width=0.25\textwidth]{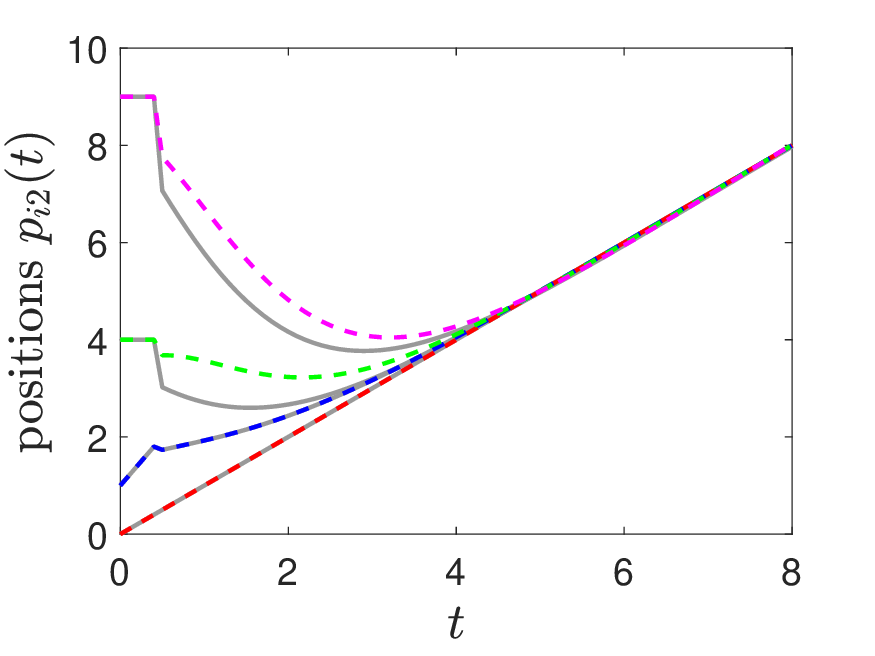}
   \includegraphics[width=0.25\textwidth]{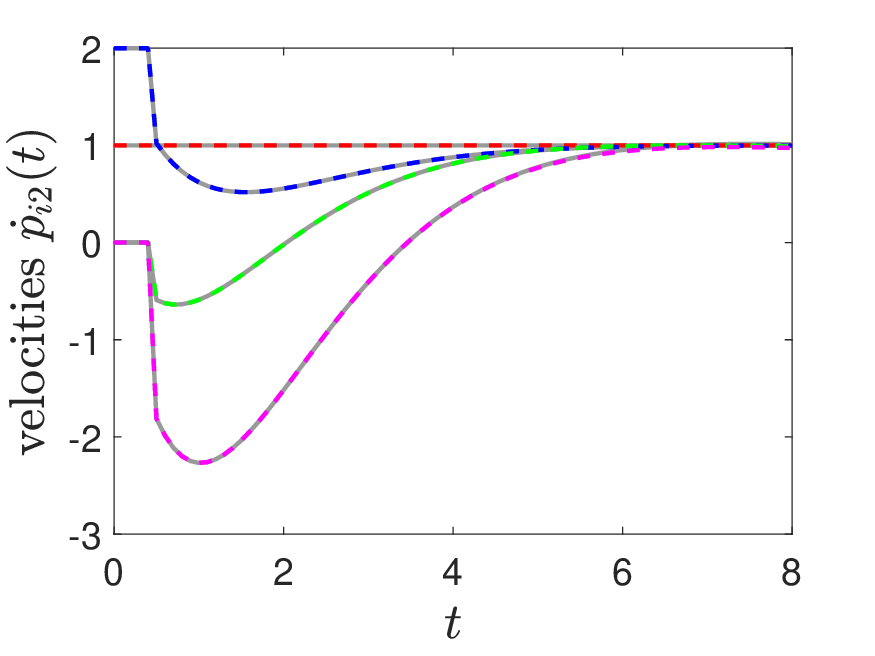}
   \includegraphics[width=0.25\textwidth]{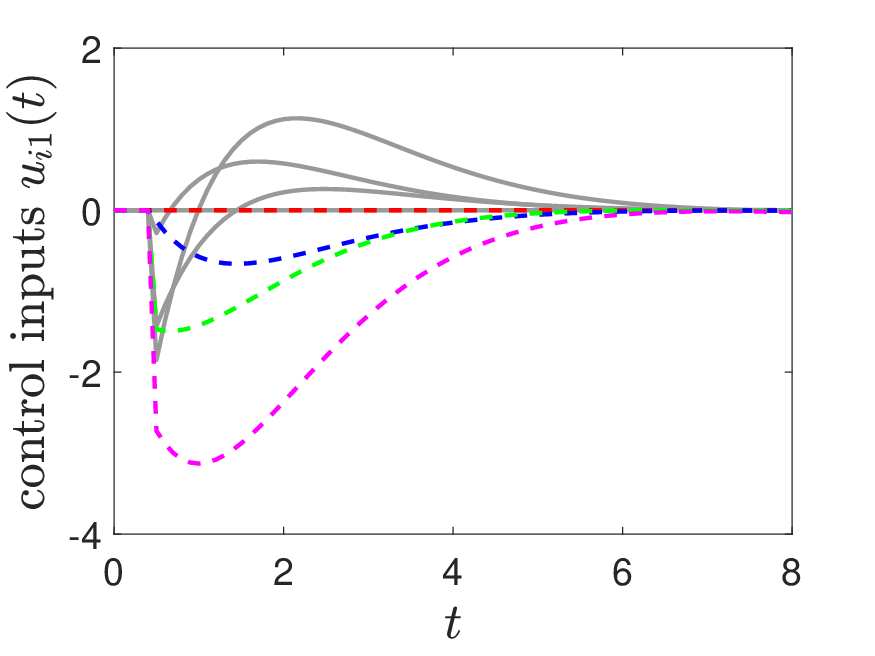}
\caption{Delayed ($\tau=0.5$) time histories under the non-distributed and distributed solutions.} 
\label{fig:histdelay}
\end{figure*}

\section{Conclusion}\label{sec:con}
This paper presents a fully distributed explicit solution for the finite-planning horizon DGGs defined to govern the cooperative control of double-integrator multi-agent systems in a $q$-dimensional space ($q\in\{1,2,3\}$). The classical solution is associated with solving either a set of coupled (asymmetric) Riccati differential equations or, equivalently, a TPBVP and is non-distributed since it requires global state information. For future work, the proposed systematic approach could be generalized to solving DGGs with the feedback information structure and/or MAS with more complex dynamics.

\appendices

Appendixes, if needed, appear before the acknowledgment.

\counterwithin{equation}{section}
\section{Decomposition of \textit{M}}\label{app:decompose}

The matrix $M$ is defective (see Appendix \ref{app:defective}). Its Jordan decomposition is given by (\ref{eq:jordan}) where 
\begin{equation}
    J=\begin{bmatrix}
        I_{m(m-1)\times m(m-1)} & & &\\
        & J_2 (0) &  & \\
        & & \ddots & \\
        & &  & J_1 (\lambda_i )
    \end{bmatrix} 
\end{equation}
is the Jordan normal form of $M$ with $J_i(.)$ defined in (\ref{eq:jordan-block}) being the Jordan block of size $i$ (see Appendix \ref{app:jordan}). 
The generalized modal matrices $\Phi$ and $\Psi$ are constituted from vectors $v_i$ and $w_i$ as following 
\begin{equation*}
\Phi=\begin{bmatrix}
  0_{2m\times2m(m-1)}  & \cdots, v_i, \cdots \\
  *
\end{bmatrix}, 
\Psi=\begin{bmatrix}
  0_{2m(m-1)\times 2m},* \\
  \vdots\\ w_i\\ \vdots 
\end{bmatrix}  
\end{equation*}
where * are the elements/blocks not to be concerned with (see Appendix \ref{app:eigenvector}). 

Vectors $v_i$ and $w_i$, given at the top of the next page,
\begin{figure*}%[!b]%[t!]
\small
\begin{align}
  v_i(\lambda_i) & =\left[\left(\sigma_i^\top(\lambda_i),-\lambda_i\sigma_i^\top(\lambda_i) \right) ,(0_{1\times m},0_{1\times m}),\cdots,\left((\frac{1}{\lambda_i}-\tau)\sigma_i^\top(\lambda_i)W_i,\frac{1}{\lambda_i^2}-(\tau^2+1)\sigma_i^\top(\lambda_i)W_i\right) ,\cdots,(0_{1\times m},0_{1\times m}) \right]^\top \label{eq:right-eign}\\
  w_i(\lambda_i) & =\left[(\left(\frac{\mu_i}{\hat{r}_i}\tau\frac{1}{\lambda_i^2}+\frac{\mu_i}{\hat{r}_i}(\tau^2+1)\frac{1}{\lambda_i}-\lambda_i\right)\sigma_i^\top(\lambda_i),\sigma_i^\top(\lambda_i) ),(0_{m\times 1},0_{m\times 1}),\cdots,(\frac{1}{\lambda_i^2}\sigma_i^\top(\lambda_i) \delta_i,\frac{1}{\lambda_i}\sigma_i^\top(\lambda_i) \delta_i),\cdots,(0_{m\times 1},0_{m\times 1}) \right] \label{eq:left-eign}
\end{align}
\normalsize
\hrule
\end{figure*}
are the right and left eigenvector associated with the eigenvalue $\lambda_i$ of $M$, respectively (see Appendix \ref{app:eigenvector}). 

The nonzero eigenvalues of $M$ are $\lambda_i,\overline{\lambda}_i,-\lambda_i,-\overline{\lambda}_i,\cdots$ where
  \begin{equation}\label{eq:quartic-roots}
  \lambda_i=\sqrt{\frac{1}{2}\frac{\mu_i}{\hat{r}_i}(\tau^2+1)+\sqrt{\frac{1}{4}\left(\frac{\mu_i}{\hat{r}_i}(\tau^2+1)\right)^2-\frac{\mu_i }{\hat{r}_i}}} 
  \end{equation}
  for $i=1,\cdots,m$ and $\overline{\lambda}_i$ denotes the conjugate of $\lambda_i$ (see Appendix \ref{app:eigenvalue}).

\section{Eigenvalues of \textit{M}}\label{app:eigenvalue}
Eigenvalues of $M$ correspond to the roots of its characteristic polynomial
\begin{equation}
    \rho(\lambda)=\det(M-\lambda I)
\end{equation}
where $\lambda$ denotes the unknown eigenvalues. $M=\begin{bmatrix}
    -\hat{A} & S \\ Q & V
\end{bmatrix}$ where $V=\mathrm{diag}(\hat{A}^\top,\cdots,\hat{A}^\top)$, $S=[S_1,\cdots,S_m]$, and $Q=[Q_1,\cdots,Q_m]^\top$.

The determinant of $M-\lambda I$ is~\cite{Silvester2000DeterminantsOB}
\begin{align}\label{eq:block-det}
     &\det(M-\lambda I)=\det \begin{bmatrix}
     -\hat{A}-\lambda I & S\\
     Q & V-\lambda I
   \end{bmatrix}\nonumber\\
   &=\det(-\hat{A}-\lambda I)  \det(V-\lambda I-Q(-\hat{A}-\lambda I)^{-1}S)
\end{align}
where using the Kronecker property (\ref{eq:Kronecker})
\begin{align}\label{eq:block-det2}
     &\det(\hat{A}-\lambda I_{2m\times 2m})=\det(- 
     \begin{bmatrix}
     0 & 1\\ 0 & 0
   \end{bmatrix} \otimes I_{m\times m} - \lambda I_2 \otimes I_{m\times m}) \nonumber\\
   &=\det(- 
     \begin{bmatrix}
     \lambda & 1\\
     0 & \lambda
   \end{bmatrix} \otimes I_{m\times m})= \Big(\det (-\begin{bmatrix}
     \lambda & 1\\
     0 & \lambda
   \end{bmatrix})\Big)^{m}= \lambda^{2m} 
\end{align}
Also,
\begin{align} \label{eq:block-inverse}
     (-&\hat{A}-\lambda I_{2m\times 2m})^{-1}=(- 
     \begin{bmatrix}
     \lambda & 1\\
     0 & \lambda
   \end{bmatrix} \otimes I_{m\times m} )^{-1}\nonumber\\
   &=- 
     \begin{bmatrix}
     \lambda & 1\\
     0 & \lambda
   \end{bmatrix}^{-1} \otimes I_{m\times m} =\begin{bmatrix}
     -\frac{1}{\lambda} & \frac{1}{\lambda^2}\\
     0 & -\frac{1}{\lambda}
   \end{bmatrix} \otimes I_{m\times m}.
\end{align}
From (\ref{eq:Qi})  
\begin{align}\label{eq:Q-exp}
    Q_i&=\begin{bmatrix}
       I & 0 \\ \tau I & I 
    \end{bmatrix}\begin{bmatrix}
       W_i & 0 \\ 0 & W_i
    \end{bmatrix}\begin{bmatrix}
       I & \tau I \\ 0 & I 
    \end{bmatrix}=\begin{bmatrix}
       1 & \tau \\ \tau & \tau^2+1 
    \end{bmatrix}\otimes W_{i}
\end{align}
and from (\ref{eq:S}) $\delta_i=\frac{1}{\hat{r}_i}\hat{b}_i\hat{b}_i^\top$.

Using the expressions above, we have  
\begin{align}
   Q_i (-\hat{A}-\lambda I)^{-1}S_i&=
   \left(\begin{bmatrix}
       1 & \tau \\ \tau & \tau^2+1 
    \end{bmatrix}\otimes W_{i} \right)
    \nonumber\\
    &\left(\begin{bmatrix}
     -\frac{1}{\lambda} & \frac{1}{\lambda^2}\\
     0 & -\frac{1}{\lambda}
   \end{bmatrix} \otimes I_{m\times m} \right)
       \left(\begin{bmatrix}
     0 & 0\\
     0 & 1
   \end{bmatrix} \otimes \delta_i \right)\nonumber\\
    &
   = \begin{bmatrix}
       0 & \frac{1}{\lambda^2}-\tau\frac{1}{\lambda} \\ 0 & \tau\frac{1}{\lambda^2}-(\tau^2+1)\frac{1}{\lambda} 
    \end{bmatrix}\otimes W_{i}\delta_i
\end{align}
where
$W_i\delta_i=\mathrm{diag}(0,\cdots,\frac{\mu_i}{\hat{r}_i},\cdots,0) $ and $W_i\delta_j=0~\forall i\neq j$. Similarly,
\begin{align}\label{eq:block-inverse2}
  Q(-\hat{A}&-\lambda I)^{-1}S=\nonumber\\
    &\mathrm{diag}(\cdots,\begin{bmatrix}
       0 & \frac{1}{\lambda^2}-\tau\frac{1}{\lambda} \\ 0 & \tau\frac{1}{\lambda^2}-(\tau^2+1)\frac{1}{\lambda} 
    \end{bmatrix}\otimes W_i\delta_i,\cdots).
\end{align}

%As $\hat{\bm{A}}^\top- \lambda \bm{I}_{6m\times 6m}=\bm{a}^\top\otimes\bm{I}_{m\times m} - \lambda\bm{I}_{2\times 2} \otimes\bm{I}_{3\times 3}\otimes\bm{I}_{m\times m} = (\begin{bmatrix}
%    0 & 0 \\1 & 0
%\end{bmatrix}\otimes\bm{I}_{3 \times 3}-\lambda\bm{I}_{2\times 2}\otimes\bm{I}_{3\times 3})\otimes\bm{I}_{m\times m} = \begin{bmatrix}
%  -\lambda & 0\\
%  1 & -\lambda
%  \end{bmatrix}\otimes\bm{I}_{3\times 3}\otimes\bm{I}_{m\times m}= \begin{bmatrix}
%  -\lambda & 0\\
%  1 & -\lambda
%  \end{bmatrix}\otimes\bm{I}_{m\times m}\otimes\bm{I}_{3\times 3}$ 

On the other hand,
\begin{align}\label{eq:block-op}
V- \lambda I&=\mathrm{diag}(\cdots,\hat{A}^\top- \lambda I_{2m\times 2m},\cdots)\nonumber\\
    &=\mathrm{diag}(\cdots,\begin{bmatrix}
  -\lambda & 0\\
  1 & -\lambda
  \end{bmatrix}\otimes I_{m\times m},\cdots).
\end{align}
Substituting (\ref{eq:block-det2}), (\ref{eq:block-inverse2}) and (\ref{eq:block-op}) in (\ref{eq:block-det}) yields (\ref{eq:quartic}), shown at the top of the next page. 
\begin{figure*}%[!b]%[t!]
\begin{align}\label{eq:quartic}
     \rho(\lambda)=&
     \lambda^{2m} \det\big(\mathrm{diag}(\cdots,\begin{bmatrix}
  -\lambda I_{m\times m} & (-\frac{1}{\lambda^2}+\tau\frac{1}{\lambda}) W_i\delta_i\nonumber\\
  I_{m\times m} & -\lambda I_{m\times m}+((\tau^2+1)\frac{1}{\lambda}-\tau\frac{1}{\lambda^2})W_i\delta_i
  \end{bmatrix}, \cdots)\big)\nonumber\\
  =&\lambda^{2m} \prod_{i=1}^{m}\det\Bigg(\begin{bmatrix}
  -\lambda I_{m\times m} & (-\frac{1}{\lambda^2}+\tau\frac{1}{\lambda}) W_i\delta_i\nonumber\\
  I_{m\times m} & -\lambda I_{m\times m}+((\tau^2+1)\frac{1}{\lambda}-\tau\frac{1}{\lambda^2})W_i\delta_i
  \end{bmatrix}\Bigg)\nonumber\\
  = &\lambda^{2m} \prod_{i=1}^{m}\det (-\lambda I_{m\times m})\prod_{i=1}^{m}\det\left(\left((\tau^2+1)\frac{1}{\lambda}-\frac{1}{\lambda^3}\right)W_i\delta_i-\lambda I_{m\times m}\right)\nonumber \\
  = &\lambda^{m(m+2)} \prod_{i=1}^{m}\det \left(-\lambda I_{(m-1)\times (m-1)}\right)\prod_{i=1}^{m}\left(\left((\tau^2+1)\frac{1}{\lambda}-\frac{1}{\lambda^3}\right)\frac{\mu_i}{\hat{r}_i}-\lambda\right)\nonumber \\
  = &\lambda^{m(2m+1)} \prod_{i=1}^{m}\frac{1}{\lambda^3}\left(\frac{\mu_i}{\hat{r}_i}\Big((\tau^2+1)\lambda^2-1\Big)-\lambda^4\right) = \lambda^{2m(m-1)} \prod_{i=1}^{m}\left(\lambda^4-\frac{\mu_i}{\hat{r}_i}(\tau^2+1)\lambda^2+\frac{\mu_i}{\hat{r}_i}\right)
\end{align}
\hrule
\end{figure*}
The quartic polynomial in (\ref{eq:quartic}) has four roots given by (\ref{eq:quartic-roots}), two opposites in sign, and two conjugates.

\section{Eigenvectors of \textit{M}}\label{app:eigenvector}
\subsection*{Eigenvectors associated with the nonzero eigenvalues:}\label{app:nonzero-associated-eigvec}
%Since for the nonzero eigenvalue $\zeta=1$, we must find an eigenvector of rank $1$. 
Matrix $M$ consists of $2(m+1)\times 2(m+1)$ blocks of size $m\times m$ as follows
\begin{equation}\label{eq:Mblock}
    M=\begin{bmatrix}
    0_{m\times m} & -I & 0 & 0  & \cdots & 0 & 0 \\
    0 & 0_{m\times m} & 0 & \delta_1 & \cdots & 0 & \delta_m\\
    W_1 & \tau W_1  & 0 & 0 & \cdots & 0 & 0 \\
    \tau W_1 & (\tau^2+1) W_1 & I & 0 & \cdots & 0 & 0 \\
    \vdots & \vdots &  &  & \ddots & & \\
    W_m & \tau W_m  & 0 & 0 & \cdots & 0 & 0 \\
    \tau W_m & (\tau^2+1) W_m & 0 & 0 & \cdots & I & 0 \\
\end{bmatrix}.
\end{equation}
Any nonzero right eigenvector $v_i$ satisfies  
\begin{equation}\label{eq:right-eigenvector}
    (M-\lambda_i I)v_i=0.
\end{equation}
Substituting (\ref{eq:right-eign}) into the left-hand side of (\ref{eq:right-eigenvector}) yields
\begin{align}
       (M-\lambda_i I)v_i&=\Bigg[\left(0_{1\times m},\lambda_i^2 + \frac{\mu_i}{\hat{r}_i}\left(\frac{1}{\lambda_i^2}-(\tau^2+1)\right) \sigma_i^\top(\lambda_i)\right),\nonumber\\ & (0_{1\times m}, 0_{1\times m}), \cdots,  (0_{1\times m}, 0_{1\times m}) \Bigg]^\top.
\end{align}
Here, the expression $\lambda_i^2 + \frac{\mu_i}{\hat{r}_i}\left(\frac{1}{\lambda_i^2}-(\tau^2+1)\right)$ appears to be the quartic polynomial in (\ref{eq:quartic}) that $ \lambda_i$ is a zero of it. Therefore, (\ref{eq:right-eign}) satisfies (\ref{eq:right-eigenvector}). 

Any nonzero left eigenvector $w_i$ satisfies
\begin{equation}\label{eq:left-eigenvector}
   w_i (M-\lambda_i I)=0.
\end{equation}
Similarly, substituting (\ref{eq:left-eign}) into (\ref{eq:left-eigenvector}), (\ref{eq:left-eigenvector}) is satisfied. By adopting $\varpi_i(\lambda_i)$ as (\ref{eq:varpi}), $v_i(\lambda_i)$ and $w_i(\lambda_i)$ are normalized so that $w_i(\lambda_i) v_i(\lambda_i)=1$.

\subsection*{Eigenvectors associated with the zero eigenvalue:}\label{app:zero-associated-eigvec}
  
From Appendix \ref{app:eigenvalue} we notice that $M$ has $2m(m-1)$ zero eigenvalues. Thereby, the definition of the right and left eigenvectors in (\ref{eq:right-eigenvector}) and (\ref{eq:left-eigenvector}), respectively, reduces to 
\begin{equation}\label{eq:zero-eigenvector}
    Mv_0=0, \quad w_0M=0.
\end{equation}

Vectors
\begin{align*}
    v_0&=[(0_{1\times m},0_{1\times m}),(0_{1\times m},\vartheta_1^\top ),\cdots,(0_{1\times m},\vartheta_m^\top ) ]^\top\\
    w_0&=[(0_{1\times m},0_{1\times m}),(\vartheta_1^\top,0_{1\times m}),\cdots,(\vartheta_m^\top,0_{1\times m}) ]
\end{align*}
where $\vartheta_i=[\vartheta_1^i, \cdots, [0]_i,\cdots,\vartheta_m^i]^\top \in \mathbb{R}^m$ satisfy (\ref{eq:zero-eigenvector}).

From Appendix \ref{app:jordan}, we notice that the Jordan blocks associated with the zero eigenvalues are of size $2$ where each corresponds to a chain of generalized eigenvectors of rank $2$. A generalized eigenvector of rank $1$ is an ordinary eigenvector.  

The generalized right and left eigenvectors of rank $2$ satisfy
\begin{equation}\label{eq:right-eign-vec}
    M \hat{v}_0=v_0, \quad \hat{w}_0 M =w_0
\end{equation}
where $\hat{v}_0\neq 0$ and $\hat{w}_0\neq 0$. It is verified that $\hat{v}_0=w_0^\top$ and $\hat{w}_0=v_0^\top$. The right and the right generalized eigenvector, as well as the left and the left generalized eigenvector associated with the zero eigenvalues, have the form $[0_{1\times 2m},*]^\top$ and $[0_{1\times 2m},*]$, respectively.

\section{\textit{M} Is Defective}\label{app:defective}
For a defective matrix, its Jordan normal form is a block-diagonal matrix whose diagonally located blocks are associated with the eigenvalue of that matrix in the form of 
\begin{equation}\label{eq:jordan-block}
    J_k(\lambda_i)=\begin{bmatrix}
        \lambda_i & 1 & 0 & & \\
        0 & \lambda_i & 1 & 0 \\
        &  &  \ddots &\ddots & \\
        &0 &  & \lambda_i & 1 \\
        &  &  &  0  &  \lambda_i                      
    \end{bmatrix}_{k \times k}.
\end{equation}

To determine whether $M$ is defective or not, we inspect the geometric multiplicity associated with the zero eigenvalues. The number of linearly independent eigenvectors associated with an eigenvalue is its geometric multiplicity. A defective matrix has an eigenvalue with its geometric multiplicity less than its algebraic multiplicity~\cite{shores2007applied}.

Let $\mathcal{N}(M-\lambda I)$ be the dimension of the null space of $M-\lambda I$. The geometric multiplicity of $\lambda$ is $\mathrm{dim}\mathcal{N}(M-\lambda I)$ which is the number of free variables in
\begin{equation}\label{eq:eigenvector}
    (M-\lambda I)v=0
\end{equation}
where $v$ is a nonzero vector.

Consider (\ref{eq:Mblock}). Let $v=[(\varrho_0^\top,\vartheta_0^\top),(\varrho_1^\top,\vartheta_1^\top),\cdots,(\varrho_m^\top,\vartheta_m^\top)]^\top$, $\varrho_i=[\varrho_1^i,\cdots,\varrho_m^i]^\top\in \mathbb{R}^{m\times 1}$, $\vartheta_i=[\vartheta_1^i,\cdots,\vartheta_m^i]^\top\in \mathbb{R}^{m\times 1}$. For $\lambda=0$, (\ref{eq:eigenvector}) reduces to
\begin{equation}
    \vartheta_0=\varrho_i=0,\quad i=0,\cdots,m,\quad \sum_{i=1}^{m}\delta_i\vartheta_i= \begin{bmatrix}
    r_1^{-1}\varrho_1^1\\
    \vdots\\
    r_m^{-1}\varrho_m^m
    \end{bmatrix}=0.
\end{equation}
Clearly, $\vartheta_i^i$ in $\vartheta_i$ must be zero, thus, there are $m-1$ free variables left in $\vartheta_i$. Consequently, the total number of free variables is $m(m-1)$ and 
\begin{equation*}
    \mathrm{dim}\mathcal{N}(M)=m(m-1).
\end{equation*}
As a result of the geometric multiplicity of the zero eigenvalues being less than their algebraic multiplicity, $M$ is defective.

\section{Jordan Blocks of \textit{M}}\label{app:jordan}

By definition, the size of the Jordan block associated with $\lambda_i$, i.e., $\zeta$, is the first integer for which $\mathrm{dim}\mathcal{N}(M-\lambda_i I)^\zeta$ stabilizes. This is equal to the number of free variables in 
\begin{equation}\label{eq:free-variable}
    (M-\lambda I)^j v=0.
\end{equation}

For the eigenvectors associated with the zero eigenvalues, we have $\mathrm{dim}\mathcal{N}(M)^1=m(m-1)$. It can be verified that vector $v$ (as defined in Appendix \ref{app:defective}) with $\varrho_0=\vartheta_0=0$ and $\varrho_i=\vartheta_i=[\varrho_1^i,\cdots,[0]_i,\cdots,\varrho_m^i ]^\top$, $ i=1,\cdots,m$ satisfies (\ref{eq:free-variable}) for $j=2$ and $j=3$. Therefore, $\mathrm{dim}\mathcal{N}(M)^2=\mathrm{dim}\mathcal{N}(M)^3=2m(m-1)$. It is seen that $\zeta=2$ and thus, there are $m(m-1)$ Jordan blocks of size $2$.

For the eigenvectors associated with nonzero eigenvalues, the number of free variables in the equation (\ref{eq:free-variable}) for $j=1$ and $j=2$ are both equal to $m$. Therefore, $\zeta=1$ and the corresponding Jordan block is of size $1$.

\section{Decomposition of \textit{H}}\label{app:decomposeH}
Substituting $\mathrm{e}^{\phi M}$ from (\ref{eq:matrix-exponential}) shows that
\begin{equation}
 H(\phi)=\begin{bmatrix}I_{2m\times 2m}& 0& \cdots &0\end{bmatrix} \Phi \mathrm{e}^{\phi J} \Psi 
\begin{bmatrix}
I\\
Q_{1T}\\
\vdots\\
Q_{mT} 
\end{bmatrix}.
\end{equation}
This expression is broken apart into the expressions (\ref{eq:part1})-(\ref{eq:part3}).
\begin{align}\label{eq:part1}
&\begin{bmatrix}I_{2m\times 2m}& 0& \cdots &0\end{bmatrix} \Phi=\nonumber\\
&\qquad\qquad\begin{bmatrix}0_{2m\times 2m(m-1)} \cdots,\begin{pmatrix} 1 \\-\lambda_i \end{pmatrix}\otimes \sigma_i(\lambda_i),\cdots \end{bmatrix} 
\end{align}
\begin{equation}\label{eq:part2}
   \mathrm{e}^{\phi J}= \begin{bmatrix}
    I_{m(m-1)\times m(m-1)} \otimes \mathrm{e}^{\phi J_2 (0)} & 0 & \cdots &0     \\
   0&  \ddots &0 & \vdots \\
   \vdots& \cdots & \mathrm{e}^{\phi \lambda_i}& 0 \\
   0 & \cdots &0 & \ddots
\end{bmatrix}
\end{equation}
\begin{figure*}%[!b]%[t!]
\begin{equation}\label{eq:part3}
\Psi 
\begin{bmatrix}
I\\
Q_{1T}\\
\vdots\\
Q_{mT} 
\end{bmatrix}= \begin{bmatrix} 
        * \\
        \vdots\\
        \left((\frac{\omega_i}{\hat{r}_i}+\frac{\mu_i}{\hat{r}_i}\tau)\frac{1}{\lambda_i^2}+(\frac{\omega_i}{\hat{r}_i}\tau+\frac{\mu_i}{\hat{r}_i}(\tau^2+1))\frac{1}{\lambda_i}-\lambda_i\right)\sigma_i^\top(\lambda_i)\qquad 
        \left(\frac{\omega_i}{\hat{r}_i}\tau\frac{1}{\lambda_i^2}+\frac{\omega_i}{\hat{r}_i}(\tau^2+1)\frac{1}{\lambda_i}+1\right)\sigma_i^\top(\lambda_i) \\
        \vdots
     \end{bmatrix} 
\end{equation}
\hrule
\end{figure*}
Finally, by multiplying the right-hand sides of (\ref{eq:part1})-(\ref{eq:part3}) we get (\ref{eq:expanH}).

%\begin{IEEEbiography} [{\includegraphics[width=1in,height=1.25in,clip,keepaspectratio]{jond}}]{Hossein B. Jond} received a Ph.D. degree in computer engineering from KTU, Trabzon, Turkey, in 2019. He was a post-doctoral researcher from 2019 to 2021 with the Department of Computer Science at the VSB-TUO, Ostrava, Czechia, where he is currently an assistant professor. His research interests include game theory and dynamic optimization. Dr. Jond was a recipient of the Turkish Government Success Scholarship in 2015 and the Best Student Paper Award at the 21st International Student Conference on Electrical Engineering held in Prague in 2017. \end{IEEEbiography}

\end{document}